\numberwithin{equation}{section}
\declaretheoremstyle[bodyfont=\it,qed=\qedsymbol]{noproofstyle}
\declaretheorem[numberlike=equation]{observation}
\declaretheorem[name=Observation,numbered=no]{observation*}
\declaretheorem[numberlike=equation]{theorem}
\declaretheorem[name=Theorem,numbered=no]{theorem*}
\declaretheorem[numberlike=equation]{lemma}
\declaretheorem[name=Lemma,numbered=no]{lemma*}
\declaretheorem[name=Corollary,numbered=no]{corollary*}
\declaretheorem[numberlike=equation]{proposition}
\declaretheorem[name=Proposition,numbered=no]{proposition*}
\declaretheorem[name=Claim,numbered=no]{claim*}
\declaretheorem[name=Conjecture,numbered=no]{conjecture*}
\declaretheorem[name=Question,numbered=no]{question*}
\declaretheoremstyle[bodyfont=\it]{defstyle} 
\declaretheorem[numberlike=equation,style=defstyle]{definition}
\declaretheorem[unnumbered,name=Definition,style=defstyle]{definition*}
\declaretheorem[unnumbered,name=Notation=defstyle]{notation*}
\declaretheorem[unnumbered,name=Construction,style=defstyle]{construction*}
\declaretheoremstyle[]{rmkstyle}
\declaretheorem[unnumbered,name=Example,style=rmkstyle]{example*}
\newcommand{\AAM}{\ensuremath{\textsc{AAM}}}
\DeclarePairedDelimiter{\set}{\{}{\}}
\DeclarePairedDelimiter{\card}{\lvert}{\rvert}
\DeclarePairedDelimiter{\paren}{\lparen}{\rparen}
\newcommand{\zp}{\mathbb{Z}_p}
\newcommand{\zk}{\mathbb{Z}_k}
\newcommand{\Ff}{\mathcal{F}}
\newcommand{\Vv}{\mathcal{V}}
\newcommand{\sfv}{\mathsf{v}}
\newcommand{\sfw}{\mathsf{w}}
\newcommand{\matint}{\ensuremath{\mathcal{C}_{\textsc{MatInt}}(q)}}
\newcommand{\partint}{\ensuremath{\mathcal{C}_{\textsc{PartInt}}(q)}}
\newcommand{\sympartint}{\ensuremath{\mathcal{C}_{\textsc{SymPartInt}}(q)}}
\newcommand{\iid}{\ensuremath{\mathcal{P}_{\textsc{iid}}}}
\newcommand{\pd}{\textsc{PD}}
\newcommand{\bernoulli}{\ensuremath{\mathcal{P}_{\textsc{iidBernoulli}}}}
\title{An Improved Lower Bound for Matroid Intersection Prophet Inequalities}
\date{}
\begin{document}
\author{Raghuvansh R.~Saxena\thanks{Microsoft Research. Email: \texttt{raghuvansh.saxena@gmail.com}}  \and Santhoshini Velusamy\thanks{School of Engineering and Applied Sciences, Harvard University, Cambridge, Massachusetts, USA. Supported in part by a Google Ph.D. Fellowship, a Simons Investigator Award to Madhu Sudan, and NSF Awards CCF 1715187 and CCF 2152413. Email: \texttt{svelusamy@g.harvard.edu}.} \and S.~Matthew Weinberg\thanks{Princeton University Computer Science. Supported by NSF CCF-1955205. Email: \texttt{smweinberg@princeton.edu}.}}

\definecolor{saffron}{rgb}{0.96, 0.77, 0.19}

\maketitle
\addtocounter{page}{-1}
\begin{abstract}
We consider prophet inequalities subject to feasibility constraints that are the intersection of $q$ matroids. The best-known algorithms achieve a $\Theta(q)$-approximation, even when restricted to instances that are the intersection of $q$ partition matroids, and with i.i.d.~Bernoulli random variables~\cite{CorreaCFPW22, FeldmanSZ16, AdamczykW18}. The previous best-known lower bound is $\Theta(\sqrt{q})$ due to a simple construction of~\cite{KleinbergW12} (which uses i.i.d.~Bernoulli random variables, and writes the construction as the intersection of partition matroids). 

We establish an improved lower bound of $q^{1/2+\Omega(1/\log \log q)}$ by writing the construction of~\cite{KleinbergW12} as the intersection of asymptotically fewer partition matroids. We accomplish this via an improved upper bound on the product dimension of a graph with $p^p$ disjoint cliques of size $p$, using recent techniques developed in~\cite{AlonA20}.

\end{abstract}

\newpage
\section{Introduction}
Consider a gambler who faces the following challenge: There is a sequence of $n$ independent random variables $X_1,\ldots, X_n$, and a set system $\mathcal{I}$ of feasibility constraints over $[n]$. The gambler knows $\mathcal{I}$, and the distribution $D_i$ of each $X_i$, but not its realization. One at a time, $X_i$ will be drawn from $D_i$ and revealed to the gambler, at which point she must immediately and irrevocably accept or reject the element. At all times, the set $A$ of accepted elements must be in $\mathcal{I}$ (meaning that if $A \cup \{i\} \notin \mathcal{I}$, the gambler must reject $i$). The gambler's payoff at the end of the game is $\sum_{i \in A} X_i$. 

The gambler's goal is to design an algorithm that maximizes her expected reward, and competes against a prophet. The prophet knows all realizations when making decisions, and therefore achieves expected reward $\mathbb{E}_{\vec{X} \leftarrow \vec{D}}[\max_{S \in \mathcal{I}}\{\sum_{ i \in S} X_i\}]$. The ratio of the prophet's expected reward to the optimal gambler's expected reward is referred to as a prophet inequality.

Prophet inequalities have received significant attention within optimization under uncertainty, and within TCS broadly, due to their similarity to online algorithms and additionally, due to a deep connection to multi-dimensional mechanism design discovered by~\cite{ChawlaHMS10}. The canonical question asked is the following: for a given class $\mathcal{C}$ of potential feasibility constraints, what is $\alpha(\mathcal{C})$, the best prophet inequality that can be guaranteed on any instance with $\mathcal{I} \in \mathcal{C}$?~\cite{ChawlaHMS10,Alaei11, KleinbergW12, AzarKW14, GobelHKSV14, DuttingK15, FeldmanSZ16, Rubinstein16, LeeS18, GravinW19, AnariNSS19, EzraFGT20, CaramanisDFFLLPPR22, CorreaCFPW22}.

In this direction, asymptotically tight (and sometimes, exactly tight) bounds are known on $\alpha(\mathcal{C})$ for many classes of interest. For example, when $\mathcal{C}$ is the class of $1$-uniform matroids\footnote{A $1$-uniform matroid is just the collection of $n$ singleton sets and the empty set.}, $\alpha(\mathcal{C}) = 2$~\cite{KrengelS78,Samuel-Cahn84}. When $\mathcal{C}$ is the class of $k$-uniform matroids, $\alpha(\mathcal{C}) = 1 +\Theta(1/\sqrt{k})$~\cite{Alaei11}. When $\mathcal{C}$ is the class of all matroids, $\alpha(\mathcal{C}) = 2$~\cite{KleinbergW12}. See \cref{sec:related} for further discussion.

Perhaps the most canonical class of constraints where asymptotically-tight guarantees remain unknown is the intersection of $q$ matroids. Here, state-of-the-art algorithms achieve an $e(q+1)$-approximation~\cite{FeldmanSZ16}, and an improved $(q+1)$-approximation for the intersection of $q$ partition matroids~\cite{CorreaCFPW22}.\footnote{Note that the intersection of $q$ partition matroids is equivalent to the case where each element is a hyperedge in a $q$-dimensional $q$-partite hypergraph, and $\mathcal{I}$ contains all matchings.} Note also that even if we restrict attention to cases where each $D_i$ is i.i.d.~Bernoulli, asymptotically better algorithms are not known. On the other hand, the best-known lower bound of $\Omega(\sqrt{q})$ comes from a simple construction of~\cite{KleinbergW12}, where feasibility constraints can be written as the intersection of partition matroids and are fully-symmetric (see \cref{sec:prelim} for a formal definition), and the distributions are i.i.d.~Bernoulli. Our main result provides the first improvement on their lower bound.

\begin{theorem}
\label{thm:informal}
For any $q$, let $\partint$ be the class of feasibility constraints that are the intersection of $q$ partition matroids. Then $\alpha(\partint) \geq q^{1/2+\Omega(1/\log \log q)}$. 
\end{theorem}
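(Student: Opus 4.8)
The plan is to start from the known $\Omega(\sqrt q)$ lower bound construction of Kleinberg--Weinberg and ``repackage'' its feasibility constraints using fewer partition matroids. Concretely, that construction lives on a ground set consisting of $p^p$ disjoint cliques (``groups'') of size $p$, where the feasible sets are exactly those that pick at most one element from each clique and at most one element from each of $p$ ``global'' structures; the relevant parameter is $q = \Theta(p)$ and the resulting lower bound is $\Omega(\sqrt p) = \Omega(\sqrt q)$. The key observation is that the number of partition matroids needed to express ``at most one element per clique'' is governed by the \emph{product dimension} of the disjointness graph on $p^p$ cliques of size $p$: a proper edge-coloring of a product representation translates directly into a family of partition matroids whose intersection carves out exactly this constraint. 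So the first step is to make this translation precise --- define the graph $G_{p}$ with $p^p$ disjoint $p$-cliques, recall the definition of product dimension $\mathrm{pdim}(G)$, and show that if $\mathrm{pdim}(G_p) = d$ then the Kleinberg--Weinberg instance on parameter $p$ can be realized as the intersection of $d + O(1)$ (or $d + p$, whichever the bookkeeping gives) partition matroids, with the distributions still i.i.d.\ Bernoulli and the constraints still fully symmetric.

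The second step is the combinatorial heart: proving an improved upper bound $\mathrm{pdim}(G_p) \le p^{1 - \Omega(1/\log\log p)} \cdot (\text{lower-order factors})$, so that a construction that ``looks like'' it needs $\Theta(p)$ matroids actually needs only $q' := p^{1-\Omega(1/\log\log p)}$ of them; re-parametrizing in terms of $q'$ then yields a lower bound of $(q')^{1/2 + \Omega(1/\log\log q')}$, i.e.\ \cref{thm:informal}. To get this bound I would follow the Alon--Alon \cite{AlonA20} technology for product dimension / biclique covers: the disjointness graph on $p^p$ cliques of size $p$ is (up to complementation) a blow-up/union structure whose product dimension is controlled by covering a suitable ``equality-type'' graph by bicliques, and Alon--Alon's recent improvement shaves the trivial bound by a super-constant factor via a clever recursive/probabilistic partitioning of coordinates. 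The routine-but-necessary work here is to check that their argument applies to exactly this graph (the number of cliques, $p^p$, and clique size, $p$, are chosen precisely so that the ``$p$ symbols in $p$ coordinates'' counting works out), and to track how the $\Omega(1/\log\log p)$ savings in the exponent propagates through.

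The third step is purely bookkeeping: verify that the prophet-inequality lower bound value of the Kleinberg--Weinberg instance depends only on $p$ (through the $\sqrt p$ bound), re-express it in terms of the new matroid count $q'$, and conclude $\alpha(\partint) \ge q^{1/2 + \Omega(1/\log\log q)}$ after renaming $q' \mapsto q$. One should also double-check the ``$+O(1)$'' or ``$+p$'' slack from the global structures does not dominate $q'$ --- since $q' = p^{1-o(1)} \gg $ any additive $O(1)$ and we will have absorbed the $p$ global matroids into a lower-order correction (or, more likely, the construction is set up so the global part costs only a constant number of matroids), this is not an issue, but it must be stated.

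I expect the main obstacle to be Step 2: extracting the precise product-dimension upper bound for this specific family of graphs from the Alon--Alon machinery. The $\Omega(\sqrt q)$ versus $q^{1/2+\Omega(1/\log\log q)}$ gap is entirely a function of how much below $p$ one can push $\mathrm{pdim}(G_p)$, so the entire quantitative content of the theorem rides on getting the recursion in the biclique-cover / product-representation argument to yield a $\log\log$-type saving rather than merely a constant-factor one; making sure the parameters $p^p$ and $p$ interlock correctly with the recursion depth is the delicate part. Steps 1 and 3 are conceptually straightforward translations, modulo careful definitions of ``fully symmetric'' and of how an edge coloring of a product representation becomes a partition-matroid intersection.
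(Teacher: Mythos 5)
Your high-level outline---reduce the question to bounding the product dimension of $Q(p,p^p)$ and improve that bound via the Alon--Alon machinery---is the right one, but the quantitative skeleton you build around it is wrong in ways that make the argument collapse. The Kleinberg--Weinberg feasible sets are the subsets lying \emph{entirely within} one clique (the $q$ partition matroids, one per product-dimension coloring, jointly enforce pairwise adjacency); they are not the sets picking ``at most one element from each clique,'' which would be a single partition matroid and give a gap of $O(1)$. Consequently the prophet--gambler gap is $\Theta(p)$, not $\Theta(\sqrt p)$ as you assert, and the trivial number of partition matroids is $\Theta(p^2)$ (from $\AAM(p,p^p)\le p\cdot\lceil\log_p p^p\rceil=p^2$), not $\Theta(p)$; there are no ``$p$ global structures'' costing extra matroids. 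Your two misstatements cancel at the level of the $\Omega(\sqrt q)$ baseline, but the explicit target you set for Step 2---$\pd(p,p^p)\le p^{1-\Omega(1/\log\log p)}$---is provably false: combined with the $\Theta(p)$ gap it would give $\alpha(\partint)\ge \Omega(p)=q^{1+\Omega(1/\log\log q)}$, exceeding the known $q+1$ upper bound of \cite{CorreaCFPW22}. The correct target is to improve the $p^2$ bound to $p^{2-\Omega(1/\log\log p)}$.

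Second, you treat the application of \cite{AlonA20} as a ``routine-but-necessary'' check that their theorem applies to $Q(p,p^p)$. It does not apply directly: their bound requires dimension $\ell\ge p^{5\log\log p}$, far above the $p^2$ trivial bound you want to beat, and it then produces a $\mathbb{Z}_p$-covering family of size roughly $2^\ell\gg p^p$. The actual work in the paper is to run \cite{AlonA20} over a much smaller prime $k$ chosen so that $k^{5\log\log k}\approx \log_2(p^p)=p\log_2 p$, and then to repair the mismatch in two steps: (i) convert the resulting family, whose arithmetic is mod $k$, into a $[0,k-1]$-covering family over $\mathbb{Z}_p$ via a bitwise-indicator trick that inflates the length by a $\log_2 k$ factor (\cref{lem:ff2}), and (ii) boost the covered set from $[0,k-1]$ to all of $\mathbb{Z}_p$ by concatenating $\tilde O(p/k)$ suitably scaled copies, with the scalars chosen by a probabilistic argument (\cref{lemma:primes}). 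Each step costs only polylog factors, which is affordable precisely because the $k$-versus-$p$ switch saves a $p^{\Omega(1/\log\log p)}$ factor in the exponent. None of this bridging is present in your proposal, and without it the argument has no quantitative content beyond the trivial $p^2$ bound.
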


While the quantitative improvement in \cref{thm:informal} over $\Omega(\sqrt{q})$ is relatively minor, we highlight several aspects of the significance of our approach below. 

\subsection{Context and Technical Highlights}

First, we note that the construction and analysis of~\cite{KleinbergW12} is exceptionally simple, and has not previously been improved. Specifically, for any $p$, their construction provides a prophet inequality instance with i.i.d.~Bernoulli distributions where: 
\begin{inparaenum}[(a)] 
\item  it is straightforward to argue that the gambler achieves at most an $\Omega(p)$ fraction of the prophet's expected reward, and 
\item it is reasonably simple (although non-trivial) to argue that the feasibility constraints can be written as the intersection of $p^2$ partition matroids.
\end{inparaenum}
We overview both aspects of their construction in \cref{sec:pi-to-aam}. Their simple construction is a canonical hard instance, and plausibly witnesses (asymptotically) the strongest inapproximability among matroid intersection prophet inequalities. Prior to our work, it was plausible that $p^2$ is the minimum number of matroids needed to write their construction. Beyond \cref{thm:informal}, one contribution of our results is an improved analysis of this canonical construction.

Beyond matroid intersection prophet inequalities as an application, analysis of their construction has connections to a purely graph-theoretic problem in Combinatorics. Specifically, the \emph{product dimension} of a graph $G$ is the minimum number of proper vertex colorings of $G$ so that every pair of non-adjacent edges in $G$ have the same color in at least one coloring. If $Q(s,r)$ denotes the disjoint union of $r$ cliques each of size $s$, then the number of partition matroids needed to write the~\cite{KleinbergW12} construction is exactly the product dimension of $Q(p, p^p)$ (we will formally state this when we overview their construction). Prior to our work, the best-known upper bound on the product dimension of $Q(p, p^p)$ was $p^2$. Our work improves this to $p^{2-\Omega(1/\log\log p)}$, leveraging recent work on the product dimension of $Q(s,r)$ for $r \gg s^s$~\cite{AlonA20}. We overview further related work on the product dimension of $Q(s,r)$ in \cref{sec:related}.

Finally, we additionally note a broader agenda in our work that, to the best of our knowledge, has not been previously studied within the TCS community: given a set system $\mathcal{I}$, what is the minimum number $q$ of matroids $\mathcal{I}_1,\ldots, \mathcal{I}_q$ so that $\mathcal{I} = \cap_{i=1}^q \mathcal{I}_q$? Our work brings advanced tools from Combinatorics to address questions of this form when we additionally ask that all $\mathcal{I}_j$ are partition matroids. A stronger toolkit for this agenda will be useful to analyze broad algorithmic questions on matroid intersections, especially in cases where it is straightforward to construct a canonical hard instance (such as the~\cite{KleinbergW12} construction), but it is not straightforward to write it as a matroid intersection.

\subsection{Related Work}\label{sec:related}
Krengel, Sucheston, and Garling~\cite{KrengelS78} pose the first single-choice prophet inequality, and Samuel-Cahn shows how to achieve the same optimal guarantee with an exceptionally simple thresholding algorithm~\cite{Samuel-Cahn84}. Chawla et al. identify a fundamental connection between prophet inequalities and multidimensional mechanism design, and design novel prophet inequalities for the case when $\mathcal{I}$ is the intersection of two partition matroids~\cite{ChawlaHMS10}. Following this, numerous works identify asymptotically optimal (and sometimes, exactly optimal) prophet inequalities for uniform matroids~\cite{Alaei11, AzarKW14, JiangMZ22}, arbitrary matroids~\cite{KleinbergW12, FeldmanSZ16, LeeS18}, polymatroids~\cite{DuttingK15}, the intersection of two partition matroids~\cite{GravinW19, EzraFGT20}, independent sets in graphs~\cite{GobelHKSV14}, and arbitrary downwards-closed set systems~\cite{Rubinstein16}. Recent works also consider efficient approximation schemes for the optimal gambler strategy~\cite{AnariNSS19}, and algorithms with limited samples~\cite{AzarKW14, RubinsteinWW20, CaramanisDFFLLPPR22}. 

One canonical class of feasibility constraints for which an asymptotically-tight prophet inequality remains unknown is the intersection of $q$ matroids. On the positive side, state-of-the-art algorithms achieve an approximation guarantee of $O(q)$~\cite{KleinbergW12, FeldmanSZ16, CorreaCFPW22}. For the intersection of $q$ arbitrary matroids, the best-known guarantee is $e(q+1)$~\cite{FeldmanSZ16}. For the intersection of $q$ partition matroids, the best-known guarantee is $q+1$~\cite{CorreaCFPW22}. The best known lower bound is $\Omega(\sqrt{q})$, due to a simple construction of~\cite{KleinbergW12}.\footnote{Incidentally, \cite{KleinbergW12} mistakenly claim a lower bound of $\Omega(q)$ in their paper. This mistake, which was later realized by the authors, does not affect any of the other results in the paper, which include an $O(q)$ prophet inequality for intersection of $q$ matroids. However, it does quantitatively affect the obtained lower bound, which was used to claim that the algorithm is tight. We elaborate on this when we overview their construction.}

Also related to our work is the distinction between adversarial-order prophet inequalities vs.~random-order prophet inequalities, and arbitrary product distributions vs.~i.i.d.~distributions. There is a very rich literature on single-choice prophet inequalities from i.i.d.~distributions~\cite{AbolhassaniEEHKL17, CorreaFHOV17, CorreaDFS19, RubinsteinWW20, CorreaCES20,CorreaDFSZ21, GuoHTZ21}, and an equally rich literature on random-order prophet inequalities~\cite{EsfandiariHLM15, EhsaniHKS18, AzarCK18, AdamczykW18, FuTWWZ21,CorreaSZ21, PollnerRSW22, ArnostiM22}. However, all of these works identify constant-factor improvements under i.i.d~or random-order restrictions. Most consider settings (such as matroids, matchings, or single-choice) where constant-factor prophet inequalities exist with adversarial order and non-i.i.d.~distributions (and therefore beyond constant-factor improvements are not possible). The most relevant of these works to our results is~\cite{AdamczykW18}, which provides an improved prophet inequality of $q+1$ (from $e(q+1)$) for the intersection of $q$ matroids subject to random arrival order (instead of adversarial). The relevant aspect of this body of works to our paper is that the best-known algorithms, even when restricted to instances with i.i.d.~Bernoulli random variables and intersections of $q$ partition matroids, achieve at best a $O(q)$ approximation guarantee. At the same time, the hardest-known construction, witnessing an imapproximability of $\Omega(\sqrt{q})$, also uses i.i.d.~Bernoulli random variables and is the intersection of $q$ partition matroids. We emphasize this aspect when stating our main results.

Our results provide improved upper bounds on the product dimension of $Q(p, p^p)$, the disjoint union of $p^p$ cliques of size $p$. Our result leverages recent progress of~\cite{AlonA20} on the product dimension of $Q(s,r)$ when $r \gg s^s$. Prior to their work,~\cite{LovaszNP80,Alon86} nail down $Q(2,r)$. 

\subsection{Summary and Roadmap}
We improve the best-known lower bound on matroid intersection prophet inequalities to $q^{1/2+\Omega(1/\log\log q)}$, via an improved upper bound of $p^{2-\Omega(1/\log\log p)}$ on the product dimension of $Q(p, p^p)$. 

This paper is structured as follows: After providing the required definitions in \cref{sec:prelim}, we provide the known connection between the product dimension and matroid intersection prophet inequalities in \cref{sec:pi-to-aam}. We then overview the framework of \cite{AlonA20} in \cref{sec:summary_AlonA20} and present our improvement in \cref{sec:main_code}. We finish with some concluding remarks in \cref{sec:conclusion}.

\section{Preliminaries}
\label{sec:prelim}

\paragraph{Basic Notation.} We use $[a,b]$ to denote the set of integers between $a$ and $b$ including $a$ and $b$. We also use $[n]$ in place of $[1,n]$ to denote the set of integers $\{1,\dots,n\}$. For prime $p$, we interchangeably view $\zp$ as the field $\mathbb{Z}/p\mathbb{Z}$ and the set $[0,p-1]$. The notion considered will be clear from context.

\paragraph{Refresher on Matroids.} A set system $\mathcal{I}$ over $[n]$ is a matroid if $\mathcal{I}$ is downwards-closed (for all $S \subseteq T$, $T \in \mathcal{I} \Rightarrow  S \in \mathcal{I}$), non-trivial ($\emptyset \in \mathcal{I}$), and satisfies the augmentation property (for all $S, T \in \mathcal{I}$, if $|T| > |S|$, there exists an $i \in T \setminus S$ such that $S \cup \{i\} \in \mathcal{I}$). A partition matroid $\mathcal{I}$ partitions $[n]$ into disjoint sets $S_1,\ldots, S_k$, and deems a set $T \in \mathcal{I}$ if and only if $|T \cap S_i| \leq 1$ for all $i$. The intersection of $q$ matroids is a set system $\mathcal{I}$ that can be written as $\mathcal{I}=\cap_{j=1}^q \mathcal{I}_j$, where each $\mathcal{I}_j$ is a matroid. If each $\mathcal{I}_j$ is a partition matroid, we will call this the intersection of $q$ partition matroids. Observe that $\mathcal{I}$ is an intersection of $q$ partition matroids if and only if there exists a $q$-partite $q$-dimensional hypergraph with edges as elements of $[n]$ so that a set $S$ of edges is feasible if and only if they form a matching.\footnote{To see one direction, let $G = (V_1 \sqcup \ldots \sqcup V_q, E)$ be a $q$-partite $q$-dimensional hypergraph. For each $j$, define a partition matroid $\mathcal{I}_j$ that partitions edges by which node in $V_j$ they are adjacent to. To see the other direction, let $\mathcal{I}_1,\ldots, \mathcal{I}_q$ be partition matroids. Make a $q$-partite $q$-dimensional hypergraph with nodes $V_1,\ldots, V_q$, where $|V_j|$ is equal to the number of parts in $\mathcal{I}_j$. For each element $i \in [n]$, make an edge in the graph containing exactly one node in $V_j$, corresponding to the part containing $i$.}

\subsection{Approach to Bound Product Dimension}
\paragraph{Product Dimension.} Recall that the product dimension of a graph $G$ is the minimum number of proper colorings of the vertices of $G$ such that for every non-adjacent pair $(u,v) \in V(G)$, they share the same color in at least one coloring. We'll use the notation $\pd(s,r)$ to refer to the product dimension of the graph consisting of $r$ disjoint cliques of size $s$. Our approach to upper bound $\pd(s,r)$, also used in~\cite{AlonA20}, is based on the following definitions:

\begin{definition}[$(\ell, p)$-Family] 
\label{def:family}
Let $\ell$ be a non-negative integer and $p$ be a prime. We use the term $(\ell,p)$-Family to refer to subsets of $\mathbb{Z}_p^\ell$.
\end{definition}

\begin{definition}[$S$-covering] 
\label{def:covering}
For two vectors $\vec{v},\vec{w} \in \mathbb{Z}_p^\ell$, and a set $S \subseteq \mathbb{Z}_p$, we say that the pair $\vec{v},\vec{w}$ is \emph{$S$-covering} if for all $s \in S$, there exists an index $i \in [\ell]$ such that $v_i - w_i = s \pmod p$. We say that an $(\ell,p)$-Family $\mathcal{F}$ is \emph{$S$-covering} if every pair of distinct elements in $\mathcal{F}$ is $S$-covering.
\end{definition}

In particular, we will be interested in the following quantity:

\begin{definition}
\label{def:aam}
 Define $\AAM(p,N)$ (the ``Alon-Alweiss Measure'') to be the minimum $\ell$ such that a $\mathbb{Z}_p$-covering $(\ell,p)$-Family of size $N$ exists. 
\end{definition}

Intuitively, we think of being given a fixed (large) prime $p$, and a target $N$. Our goal is to find a family of $N$ vectors over $\mathbb{Z}_p$, such that for any pair of vectors $\vec{v},\vec{w}$ in the family, and any $s \in \mathbb{Z}_p$, there exists an index $i$ such that $v_i-w_i = s \pmod p$. As the dimension $\ell$ of the vectors grows, this becomes easier. Our goal is to find constructions of this form with the smallest possible $\ell$, and $\AAM(p,N)$ denotes the minimum $\ell$ for which this is possible. We now confirm the relation between $\AAM(p,N)$ and the product dimension of disjoint cliques.

\begin{observation}[\cite{AlonA20}] 
\label{obs:aa20}
$\pd(p,N) \leq \AAM(p,N)$.
\end{observation}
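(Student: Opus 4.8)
The plan is to convert a $\mathbb{Z}_p$-covering $(\ell,p)$-family of size $N$ directly into $\ell$ proper colorings of $Q(p,N)$, the disjoint union of $N$ cliques of size $p$, that witness its product dimension. The idea is to label the vertices of the $j$-th clique by the elements of $\mathbb{Z}_p$ and, in the $i$-th coloring, to ``shift'' those labels by the $i$-th coordinate of the $j$-th vector of the family; the covering property is exactly what guarantees that every non-adjacent pair collides in some coordinate.

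Concretely, I would set $\ell = \AAM(p,N)$ and fix, via \cref{def:aam}, a $\mathbb{Z}_p$-covering $(\ell,p)$-family $\mathcal{F} = \{\vec v^{(1)},\dots,\vec v^{(N)}\} \subseteq \mathbb{Z}_p^\ell$ of $N$ (necessarily distinct, since an $(\ell,p)$-family is by definition a subset of $\mathbb{Z}_p^\ell$) vectors. Identify $V(Q(p,N))$ with $[N]\times\mathbb{Z}_p$, where $\{j\}\times\mathbb{Z}_p$ is the $j$-th clique, and for each $i\in[\ell]$ define a coloring $c_i\colon [N]\times\mathbb{Z}_p\to\mathbb{Z}_p$ by $c_i(j,a) = a + v^{(j)}_i \pmod p$.

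First I would check that each $c_i$ is a proper coloring of $Q(p,N)$: every edge lies inside a single clique, and for fixed $j$ the map $a \mapsto a + v^{(j)}_i$ is a bijection of $\mathbb{Z}_p$, so the $p$ vertices of clique $j$ receive pairwise distinct colors under $c_i$. Next I would verify the covering requirement: a non-adjacent pair must lie in two different cliques, hence has the form $(j,a),(k,b)$ with $j\neq k$. Since the vectors in $\mathcal{F}$ are distinct, $\vec v^{(j)}\neq\vec v^{(k)}$, so the pair $(\vec v^{(j)},\vec v^{(k)})$ is $\mathbb{Z}_p$-covering; applying this with $s := b-a \in \mathbb{Z}_p$ yields an index $i\in[\ell]$ with $v^{(j)}_i - v^{(k)}_i = b-a \pmod p$, equivalently $a + v^{(j)}_i = b + v^{(k)}_i \pmod p$, i.e.\ $c_i(j,a) = c_i(k,b)$. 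Thus $c_1,\dots,c_\ell$ are $\ell$ proper colorings under which every non-adjacent pair agrees on some coloring, giving $\pd(p,N)\le\ell = \AAM(p,N)$.

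I do not expect a genuine obstacle: the proof is a single explicit encoding, and the only point meriting care is the implication ``non-adjacent $\Rightarrow$ distinct cliques $\Rightarrow$ distinct vectors of $\mathcal{F}$,'' which is precisely what licenses invoking the $\mathbb{Z}_p$-covering hypothesis. Everything else — properness of the shifts and translating the equation $c_i(j,a)=c_i(k,b)$ into the difference condition of \cref{def:covering} — is routine.
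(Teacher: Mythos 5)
Your proof is correct and is essentially identical to the paper's: both define, for each coordinate $i$ of the covering family, the coloring that shifts the clique's internal label by $v^{(j)}_i$, check properness by bijectivity of the shift within each clique, and invoke the $\mathbb{Z}_p$-covering property with $s=b-a$ to find a collision for any cross-clique pair.
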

\begin{proof}

Consider a graph $G$ with $N$ disjoint cliques of size $p$. We show that if there exists a $\mathbb{Z}_p$-covering $(\ell,p)$-Family of size $N$, then there exist $\ell$ proper colorings of $G$ such that any two non-adjacent vertices in $G$ have the some color in at least one coloring. For $k \in [\ell]$, define coloring $k$ to be such that vertex $i$ in clique $j$ is given color $v^j_k + i \pmod p$, where $v^j$ is the $j$-th vector in the family. These colorings are indeed proper as the $p$ nodes in each clique receive distinct colors. Now, consider any two non-adjacent vertices, say vertex $i$ in clique $j$ and vertex $i'$ in clique $j' \neq j$. We have to show that there exists $k \in [\ell]$ such that $v^j_k + i = v^{j'}_k + i' \pmod p \iff v^j_k - v^{j'}_k = i' - i \pmod p$. However, this is true by definition of a $\mathbb{Z}_p$-covering Family and the fact that $j \neq j'$.
%
%
\end{proof}

Essentially, $\AAM(p,N)$ captures the minimum product dimension that can be achieved by using exactly $p$ colors, and by having every coloring be `cyclic' within each clique. Our main technical results will upper bound $\AAM(p,N)$.

\subsection{Matroid Intersection Prophet Inequalities}
We briefly formally define terminology that we will use when discussing approximation guarantees of prophet inequalities.

\begin{definition}[Approximability of a Prophet Inequality Instance] For a given prophet inequality instance $\mathcal{I},D_1,\dots, D_n$, let \text{OptG}($\vec{X}$) denote the set of elements selected by the optimal gambler strategy on the realizations $\vec{X}$.\footnote{Note that the optimal gambler strategy is well-defined in all cases, and can be computed (not necessarily in polynomial time) via dynamic programming.} Then the approximability $\alpha(\mathcal{I},D_1,\dots, D_n)$ of the instance is: $$\alpha(\mathcal{I},D_1,\dots, D_n):=\frac{\mathbb{E}_{\vec{X} \leftarrow \vec{D}}\left[\max_{S \in \mathcal{I}}\left\{\sum_{i \in S} X_i\right\} \right]}{\mathbb{E}_{\vec{X} \leftarrow \vec{D}}\left[\sum_{i \in \text{OptG}(\vec{X})} X_i \right]}.$$

We will also use the following two quantities to refer to the approximability of a class of prophet inequality instances. Below, $\mathcal{C}_n$ refers to a class of feasibility constraints on $n$ elements, and $\mathcal{C}:= \{\mathcal{C}_n\}_{n \in \mathbb{N}}$ refers to an ensemble of such classes, $\mathcal{P}_n$ refers to a class of product distributions on $n$ elements, $\mathcal{P}:= \{\mathcal{P}_n\}_{n \in \mathbb{N}}$ refers to an ensemble of such classes, and $\mathcal{D}_n$ refers to the set of all product distributions on $n$ elements.

$$\alpha(\mathcal{C}):=\sup_{n \in \mathbb{N},\mathcal{I} \in \mathcal{C}_n,\vec{D} \in \mathcal{D}_n}\{\alpha(\mathcal{I},\vec{D})\}.$$

$$\alpha(\mathcal{C},\mathcal{P}):=\sup_{n \in \mathbb{N}, \mathcal{I} \in \mathcal{C}_n,\vec{D} \in \mathcal{P}_n}\{\alpha(\mathcal{I},\vec{D})\}.$$
\end{definition}

For example, when $\mathcal{C}$ represents the class of all one-uniform matroids and $\mathcal{P}$ represents the class of i.i.d. distributions, we have $\alpha(\mathcal{C}) = 2$~\cite{KrengelS78, Samuel-Cahn84} and $\alpha(\mathcal{C},\mathcal{P}) \approx 1/0.745$~\cite{CorreaFHOV17}. Similarly, when $\mathcal{C}$ represents the class of all matroids, we have $\alpha(\mathcal{C}) = 2$~\cite{KleinbergW12}. 

\section{Connecting Prophet Inequalities to $\AAM(p,N)$}
\label{sec:pi-to-aam}

The following classes of feasibility constraints, and of distributions, are relevant for implications of our results:
\begin{itemize}
    \item $\matint$: feasibility constraints that can be written as the intersection of $q$ matroids.
    \item $\partint$: feasibility constraints that can be written as the intersection of $q$ partition matroids. Note that this is equivalent to the class of all feasibility constraints that can be written with elements as hyperedges in a $q$-partite $q$-dimensional hypergraph, and feasible sets as matchings in that hypergraph.
    \item $\sympartint$: feasibility constraints that are \emph{fully symmetric} and can be written as the intersection of $q$ partition matroids. For a permutation $\sigma$ over the elements and a set of feasibility constraints $\mathcal{I}$, we say that $\mathcal{I}$ is invariant under $\sigma$ if for all sets $S$, $S \in \mathcal{I} \Leftrightarrow \sigma(S) \in \mathcal{I}$. We say that $\mathcal{I}$ is fully symmetric if for all elements $x,y$, there exists a permutation $\sigma$ such that $\sigma(x) = y$ and $\mathcal{I}$ is invariant under $\sigma$.
    \item $\iid$: The class of all i.i.d distributions 
     \item $\bernoulli$: The class of all i.i.d.~Bernoulli distributions.
\end{itemize}

We first summarize the positive results known for prophet inequalities in these settings.

\begin{theorem}[\cite{FeldmanSZ16,AdamczykW18, CorreaCFPW22}] 
\label{thm:aam-ub}
The following bounds are known on the approximability of prophet inequalities for the intersection of $q$ matroids:
\begin{itemize}
    \item $\alpha(\matint) \leq e(q+1)$~\cite{FeldmanSZ16}, and $\alpha(\matint) \leq 4(q-2)$~\cite{KleinbergW12}.
    \item $\alpha(\partint) \leq q+1$~\cite{CorreaCFPW22}.
    \item $\alpha(\matint,\iid) \leq q+1$~\cite{AdamczykW18}.
    \item No improvements are known for further special cases, even $\alpha(\sympartint,\bernoulli)$.
\end{itemize}
\end{theorem}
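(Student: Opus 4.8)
The plan is to recognize that \cref{thm:aam-ub} is a synthesis of results established in prior work; accordingly, the ``proof'' consists of recalling the relevant constructions and noting why each applies in the stated special case, rather than developing any new argument.

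For $\alpha(\matint) \leq e(q+1)$ I would invoke the online contention resolution scheme (OCRS) framework of \cite{FeldmanSZ16}: fix an appropriately down-scaled point in the matroid-intersection polytope, process the elements online, and greedily accept an active element exactly when doing so keeps the running set independent in all $q$ matroids. Combining the $q$ single-matroid OCRS guarantees with the standard reduction from a $c$-selectable OCRS to a prophet inequality with $\alpha \le 1/c$ yields the factor $e(q+1)$. For $\alpha(\matint) \leq 4(q-2)$ I would instead recall the thresholding analysis of \cite{KleinbergW12}, which sets per-element acceptance thresholds so that the expected value lost to infeasibility in each of the $q$ matroids can be charged against the prophet; this gives a bound that is stronger than $e(q+1)$ in the small-$q$ regime.

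For $\alpha(\partint) \leq q+1$ the observation is that feasibility in a partition matroid is local---it depends only on how many accepted elements fall in a given part---so the crude union bound over matroids can be replaced by the sharper partition-tailored analysis of \cite{CorreaCFPW22}, removing the factor of $e$. For $\alpha(\matint,\iid) \leq q+1$ I would use that when all $D_i$ coincide the instance is distributionally invariant under permuting the arrival order, so an adversarial-order i.i.d.\ instance is equivalent to a uniformly random-order instance; the random-order prophet inequality of \cite{AdamczykW18} for the intersection of $q$ matroids then applies verbatim and gives $q+1$. The last bullet asserts nothing positive: it merely records that no construction or algorithm in the literature beats the above even for the smaller class $\sympartint$ with i.i.d.\ Bernoulli distributions, so there is nothing to prove.

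Since every ingredient is already in the literature, there is no real obstacle; the only points that need care are matching each cited bound to the special case it governs---in particular, that i.i.d.\ distributions form a sub-case of random arrival order (so that \cite{AdamczykW18} transfers), and that the partition-matroid refinement of \cite{CorreaCFPW22} is precisely what upgrades $e(q+1)$ to $q+1$ in that regime.
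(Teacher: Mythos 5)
Your synthesis of the literature is the right way to handle this citation theorem, and the summaries of the OCRS bound of~\cite{FeldmanSZ16}, the thresholding bound of~\cite{KleinbergW12}, and the partition-matroid refinement of~\cite{CorreaCFPW22} are all fine. However, your justification of $\alpha(\matint,\iid) \leq q+1$ via an ``equivalence'' between adversarial-order i.i.d.\ instances and random-order instances has a genuine gap.

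The claim that an adversarial-order i.i.d.\ instance is equivalent to a random-order one is \emph{not} correct when the feasibility constraint $\mathcal{I}$ is not symmetric. With i.i.d.\ values the \emph{value sequence} the gambler observes is exchangeable, but the gambler also observes the \emph{identity} of each arriving element, and identities matter for feasibility. Concretely, take $n=3$, $\mathcal{I}=\{\emptyset,\{1\},\{2\},\{3\},\{1,2\}\}$ (the intersection of two partition matroids), and i.i.d.\ Bernoulli$(1/2)$ values. Under arrival order $(1,2,3)$ the optimal gambler's expected reward is $9/8$, but under arrival order $(3,1,2)$ it is only $1$: the gambler is forced to commit for or against the ``isolated'' element $3$ before learning anything about $1,2$. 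So the gambler's optimal reward genuinely depends on the arrival order even for i.i.d.\ distributions, and in particular the worst fixed order can be strictly worse than the average over random orders. A guarantee of the form $\mathbb{E}_{\sigma}[\text{gambler}(\sigma)]\geq \mathrm{OPT}/(q+1)$ therefore does not automatically yield $\min_{\pi}\text{gambler}(\pi)\geq \mathrm{OPT}/(q+1)$. The naive ``simulate a random order internally by relabeling'' trick also fails here, because relabeling identities changes which accepted sets lie in $\mathcal{I}$. To make this bullet rigorous you would need to either invoke a property of the specific algorithm in~\cite{AdamczykW18} (e.g.\ that its guarantee is per-order, not merely in expectation over orders), or argue directly from that paper's proof; the distributional-equivalence argument as you wrote it does not suffice.

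One small side note: the bound $4(q-2)$ from~\cite{KleinbergW12} and the bound $e(q+1)$ from~\cite{FeldmanSZ16} are genuinely incomparable --- $4(q-2)$ is smaller for $q\lesssim 8$, and $e(q+1)$ is smaller for larger $q$ --- so ``stronger in the small-$q$ regime'' is the right characterization, but it is worth recording that both are asymptotically $\Theta(q)$.
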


In terms of lower bounds on $\alpha(\matint)$, a construction of~\cite{KleinbergW12} establishes the following. We repeat the construction below for completeness.

\begin{proposition}[\cite{KleinbergW12}]
\label{prop:kw} 
Let $q > 0$ be given and let $p > 0$ be the largest such that $q \geq \AAM(p, p^p)$. It holds that:
\[
\alpha(\sympartint,\bernoulli) \geq (1-1/e)p/2 .
\]
\end{proposition}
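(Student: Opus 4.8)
The plan is to describe the Kleinberg--Weinberg construction explicitly and verify both its two defining properties: that the gambler cannot do much better than a $\Theta(1/p)$ fraction of the prophet, and that the feasibility constraints can be written as the intersection of $q$ partition matroids whenever $q \geq \AAM(p, p^p)$. We fix a prime $p$ and take $N = p^p$. The element set is organized as a collection of $N$ disjoint "layers" (or "stages"), each of size $p$; think of the $N$ layers as indexed by the $N$ vectors of a $\mathbb{Z}_p$-covering $(\ell,p)$-family $\mathcal{F}$ with $\ell = \AAM(p,p^p) \leq q$, and the $p$ elements within a layer as indexed by $\mathbb{Z}_p$. Each element carries an i.i.d.\ Bernoulli random variable that is $1$ with some tiny probability $1/M$ (for $M$ large, e.g.\ $M \gg N$) and $0$ otherwise; after rescaling we may take the positive value to be anything convenient. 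The feasibility constraint is the "fully symmetric" one: a set $S$ of elements is feasible if and only if it contains at most one element from each layer \emph{and}, more restrictively, the elements it selects across layers are "compatible" in the sense dictated by the clique graph $Q(p, p^p)$ — i.e.\ $\mathcal{I}$ is exactly the independent-set/matching structure whose product dimension is $\pd(p, p^p)$. The point of \cref{obs:aa20} is precisely that this $\mathcal{I}$ embeds into the intersection of $\AAM(p, p^p) \le q$ partition matroids: the $k$-th partition matroid partitions the elements by the color $v^j_k + i \bmod p$ they receive in the $k$-th coloring, and a set is feasible under all of these simultaneously iff it is an independent set of $Q(p,p^p)$; full symmetry is immediate since the construction is symmetric under simultaneously permuting layers (via $\mathcal{F}$'s symmetries) and cyclically shifting within layers.

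The next step is the prophet's payoff. Because there are $N = p^p$ layers and each contains $p$ Bernoulli-$(1/M)$ variables, with $M$ chosen appropriately (say $M = \Theta(N p)$) there is a constant probability that \emph{some} element realizes to $1$; and conditioned on realizations, the prophet can select one realized-$1$ element from each layer that has one, subject to feasibility. The key combinatorial fact — which is exactly why $Q(p,p^p)$ with $p^p$ cliques is the right size — is that with $p^p$ layers, for \emph{any} realization pattern there is a feasible selection picking up roughly $p$ realized elements in expectation (morally, the $p^p$ layers are enough to guarantee a "large matching" regardless of which within-layer coordinate each $1$ lands in). Thus $\mathbb{E}[\text{prophet}] = \Omega(p) \cdot (1/M) \cdot (\text{value})$, up to the $(1-1/e)$-type constant coming from the probability that the relevant layers fire.

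The heart of the lower bound is bounding the gambler. Here one argues that \emph{any} online gambler, seeing the layers in the order they arrive, can accept at most a bounded "budget" of elements before feasibility forces rejections, and crucially that the gambler's early accept/reject decisions are made before she knows whether later, better-positioned elements will fire. A standard coupling/symmetrization argument (exploiting full symmetry: all elements are interchangeable a priori) shows the optimal gambler gains at most $O(1) \cdot (1/M) \cdot (\text{value})$ in expectation — she essentially gets only a constant number of "hits" because each acceptance, by the structure of $Q(p,p^p)$, blocks a large swath of future layers, and she cannot tell in advance which layers those should be. Dividing the two bounds gives $\alpha \geq (1-1/e)p/2$. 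The main obstacle is this last step: making precise the claim that the online gambler's expected reward is $O(1/M)$ rather than $\Omega(p/M)$ — this requires carefully setting up the symmetry-reduction so that WLOG the gambler uses a layer-oblivious threshold policy, and then a clean accounting of how each acceptance depletes feasibility in $Q(p,p^p)$; the i.i.d.\ Bernoulli structure and full symmetry are exactly the hypotheses that make this accounting tractable. Everything else (the embedding into $q$ partition matroids, the prophet's payoff, the constant factors) is either immediate from \cref{obs:aa20} or a routine computation with the chosen parameters.
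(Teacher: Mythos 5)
Your proposal has a fundamental misreading of the feasibility structure, and this causes the rest of the argument to go off the rails.

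In the construction, the feasibility constraint $\mathcal{I}$ is the set of \emph{cliques} of the graph $Q(p,p^p)$: a set $S$ of vertices is feasible iff $S$ is entirely contained in a single one of the $p^p$ disjoint cliques (and $S$ may be the whole clique). This is exactly what you get by intersecting the $q$ partition matroids coming from the colorings: $S$ is simultaneously a partial rainbow in every coloring iff every pair in $S$ is adjacent iff $S$ is a clique. Your description instead says a feasible set "contains at most one element from each layer'' plus some cross-layer compatibility. That is backwards: the whole point is that a feasible set lives inside a single layer/clique, not across layers. This misunderstanding is what makes your prophet argument collapse --- you claim the prophet picks "one realized-$1$ element from each layer'' and gets roughly $p$ hits across layers, but the prophet may only use one clique.

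The second error is the choice of Bernoulli parameter. You take success probability $1/M$ with $M \gg N = p^p$, and then assert $\mathbb{E}[\text{prophet}] = \Omega(p/M)$. But a feasible set has size at most $p$ and lies in one clique; to get reward close to $p$ the prophet needs some single clique to have nearly all of its $p$ elements realized, and with $M \gg N$ the probability that \emph{any} clique is fully (or even mostly) realized is vanishingly small, so $\mathbb{E}[\text{prophet}] = o(p/M)$. The paper uses success probability exactly $1/p$, which is the critical scaling: the probability a fixed clique is all ones is $p^{-p} = 1/N$, so with probability $1-(1-1/N)^N \geq 1-1/e$ some clique is entirely realized and the prophet gets reward $\geq (1-1/e)p$. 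Your parameter choice destroys this.

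Finally, the gambler bound in the paper is much simpler than the "coupling/symmetrization'' argument you sketch, and does not rely on a reduction to layer-oblivious threshold policies. Once the gambler accepts any element, she is committed to that element's clique forever. The accepted element contributes at most $1$. The remaining $p-1$ elements of that clique are independent Bernoulli$(1/p)$ and contribute at most $(p-1)/p < 1$ in expectation, regardless of what the gambler does afterwards. So the gambler's expected reward is at most $2$, full stop. Dividing gives $\alpha \geq (1-1/e)p/2$. Your heuristic that "each acceptance blocks a large swath of future layers'' points in a vaguely correct direction, but the actual argument is a one-line conditional expectation calculation that your proposal does not reach, largely because of the structural misunderstanding and the wrong parameter $1/M$. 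The embedding into $q$ partition matroids via \cref{obs:aa20} and the full-symmetry check are the parts of your proposal that do align with the paper.
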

\begin{proof}
Consider a graph $G$ with $p^p$ disjoint cliques of size $p$. As $q \geq \AAM(p, p^p)$, we conclude from \cref{obs:aa20} that $q \geq \pd(p, p^p)$. Thus, there exist $q$ proper colorings of $G$ such that two vertices are adjacent if and only if they have different colors in all $q$ colorings. The colorings define $q$ partitions of the vertices in $G$ and for all $k \in [q]$, we let $\mathcal{I}_k$ be the partition matroid over the partition defined by the $k$-th coloring. Let $\mathcal{I} = \cap_{i=1}^k \mathcal{I}_i$.

We first claim that $\mathcal{I}$ is just the set of all cliques in $G$. Indeed, a subset of vertices $S \in \mathcal{I}$ if and only if $S \in \mathcal{I}_k$ for all $k \in [q]$. The latter happens if and only if the vertices in $S$ have have different colors in all $q$ colorings which by definition, happens if and only if they form a clique. 

It follows from the definition of $G$ and the above claim that $\mathcal{I}$ is fully symmetric.\footnote{To see this, let $\tau: [p^p]\rightarrow [p^p]$ permute cliques, and $\rho:[p]\rightarrow [p]$ permute within a clique. Then for any $\tau, \rho$, the feasibility constraints are invariant under the permutation $\sigma_{\tau,\rho}$ that defines $\sigma_{\tau,\rho}((i,j)):=(\rho(i),\tau(j))$. Now, for any $(i,j),(i',j')$, there is a $\rho$ with $\rho(i)=i'$ and $\tau$ with $\tau(j) = j'$. For this $(\tau,\rho)$, $\sigma_{\tau,\rho}(i,j) = (i',j')$, and the constraints are invariant under $\sigma_{\tau,\rho}$. Therefore, the constraints are fully symmetric.} Now, consider the prophet inequality instance whose elements are vertices in $G$, the feasibility constraints are given by $\mathcal{I}$ and distribution for all elements is i.i.d.~Bernoulli, and equal to $1$ with probability $1/p$. The prophet for this instance simply selects the clique with the most number of $1$s. As with probability $1-(1-1/p^p)^{p^p} \geq 1-1/e$, there exists some clique with all $p$ vertices set to $1$, the prophet's expected reward is at least $(1-1/e)p$.

However, as soon as the gambler accepts some element, they are locked into a clique, without knowing the value of the other elements of the clique. The reward from the accepted element is at most $1$, and the expected reward from the rest of the clique is at most $1-1/p$. Therefore, the optimal gambler strategy gets expected reward at most $2$. Thus the multiplicative gap between the prophet and the optimal gambler is at least $(1-1/e)p/2$ and the proposition holds.

\end{proof}

\cref{prop:kw} provides a path towards showing that\footnote{\cite{KleinbergW12} also mistakenly claim that $\AAM(p, p^p) = \Theta(p)$. If true, this would imply that all the quantities $\{\alpha(\matint),\alpha(\partint)$, $\alpha(\sympartint), \alpha(\matint,\bernoulli)$, $\alpha(\partint,\bernoulli)$, $ \alpha(\sympartint,\bernoulli)\}$ are $\Theta(q)$. However, this part of their proof has a subtle error.} the algorithms referenced in \cref{thm:aam-ub} are asymptotically tight, by showing strong upper bounds on $\AAM(p,p^p)$. Here is what is known about $\AAM(p,N)$ prior to our work:

\begin{theorem}\label{thm:prior}
The following are upper and lower bounds on $\AAM(p, N)$, for prime $p$:
\begin{enumerate}
    \item \label{item:thm:prior1} $\AAM(p,N) \leq p\cdot \lceil\log_p(N)\rceil$. This implies that $\AAM(p, p^p)\leq p^2$.
    \item \label{item:thm:prior2} For sufficiently large $p$, $\AAM(p,N) \leq \max\{p^{1 + 5\log_2\log_2 p},\log_{ \paren*{ 2-\frac{1}{\log_2 p} } } N\}$~\cite{AlonA20}. 
    \item \label{item:thm:prior3} $\AAM(p,N) \geq \max\{p,\log_{\left(2+\frac{12}{p-6}\right)}(N)\}$.
\end{enumerate}
\end{theorem}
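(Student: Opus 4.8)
The plan is to establish each of the three bounds separately, treating the upper bounds in items \ref{item:thm:prior1} and \ref{item:thm:prior3} from scratch and citing \cite{AlonA20} for item \ref{item:thm:prior2}. For item \ref{item:thm:prior1}, I would give a direct product construction: to build a $\mathbb{Z}_p$-covering $(\ell,p)$-Family of size $N$, it suffices to handle $N = p^t$ for $t = \lceil \log_p N\rceil$, and I would take $\ell = p \cdot t$ coordinates grouped into $t$ blocks of $p$ coordinates each. Index the $N$ vectors by $\vec{a} = (a_1,\dots,a_t) \in \mathbb{Z}_p^t$. In block $r$, the coordinates of the vector indexed by $\vec{a}$ are defined to be the $p$ values $(a_r, a_r+1, \dots, a_r+p-1)$ — i.e., all of $\mathbb{Z}_p$ shifted by $a_r$. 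Then for two distinct indices $\vec{a} \neq \vec{b}$, pick a block $r$ with $a_r \neq b_r$; within that block the coordinatewise differences of the two vectors are $(a_r - b_r, a_r - b_r, \dots)$ — wait, that only gives one value, so instead I should stagger: in block $r$ put coordinates $a_r + 0, a_r+1,\dots,a_r+(p-1)$ for the first vector as written, and the differences across block $r$ between $\vec a$ and $\vec b$ are constantly $a_r-b_r$; to get all of $\mathbb{Z}_p$ as a difference I instead let block $r$ of vector $\vec a$ be $(a_r, a_r, \dots, a_r)$ is wrong too. The correct construction: in block $r$, coordinate $j \in [0,p-1]$ of vector $\vec a$ is $j \cdot a_r \bmod p$ won't work either since $a_r$ could be $0$. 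Cleanest: in block $r$, set the $j$-th coordinate ($j \in \mathbb{Z}_p$) of vector $\vec a$ equal to $j + \mathbbm{1}[a_r \neq 0]\cdot(\text{something})$ — I will instead use the standard trick of letting block $r$ of vector indexed by $a_r$ be the all-$a_r$ vector of length $p$ concatenated appropriately; since two vectors differ in some block $r$ with $a_r-b_r = c \neq 0$, and $c$ has multiplicative order dividing $p-1$... The robust fix is: block $r$ of vector $\vec a$ has $j$-th entry $j \cdot \delta$ where we also include a coordinate that is $0$ when $a_r = 0$ and ranges over all nonzero residues otherwise; I will work out the precise indexing in the writeup, but the upshot is $\ell \le p\lceil\log_p N\rceil$, and setting $N = p^p$ gives $\ell \le p^2$.

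For the lower bound in item \ref{item:thm:prior3}, there are two parts. The bound $\AAM(p,N) \geq p$ is immediate: a single $\mathbb{Z}_p$-covering pair $\vec v, \vec w$ requires, for each of the $p$ elements $s \in \mathbb{Z}_p$, some coordinate $i$ with $v_i - w_i = s$; since a coordinate realizes only one difference value, at least $p$ coordinates are needed (and since $N \geq 2$ such a pair exists). For the bound $\AAM(p,N) \geq \log_{2 + 12/(p-6)} N$, I would argue contrapositively: if $\mathcal{F} \subseteq \mathbb{Z}_p^\ell$ is $\mathbb{Z}_p$-covering of size $N$, I want to upper-bound $N$ in terms of $\ell$. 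The idea is a projection/encoding argument — for each vector, consider the $\ell$-tuple recording, for each coordinate, which of the $p$ residues the coordinate difference "still needs"; more concretely, restrict to a random pair and use that being $\mathbb{Z}_p$-covering is a very stringent condition, forcing the vectors to be "spread out." A clean route: since every pair must realize the difference $0$ somewhere, by a pigeonhole/sunflower-type counting one shows $|\mathcal{F}|$ can grow by at most a factor slightly above $2$ per coordinate; I expect the constant $2 + 12/(p-6)$ to come out of carefully bounding how many new vectors a single coordinate can "distinguish" given the covering requirement across all $p$ shifts. I would make this precise by an inductive argument on $\ell$: fixing the first coordinate partitions $\mathcal{F}$ into $p$ classes by value, but the covering condition on pairs within a class must be met by the remaining $\ell - 1$ coordinates, while pairs across classes get help from coordinate $1$ for only the single difference value they exhibit there.

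The main obstacle I anticipate is item \ref{item:thm:prior3}, specifically pinning down the exact base $2 + 12/(p-6)$ in the logarithm rather than a loose $O(1)$ base. The recursive counting is conceptually straightforward, but the branching analysis — tracking which difference-values are "covered by the current coordinate" versus "deferred to later coordinates," across a class structure induced by each coordinate's value — requires a somewhat delicate optimization over how an adversary could partition the family, and the stated constant suggests a particular extremal configuration (perhaps related to balanced partitions into $6$-ish blocks, given the $p-6$). The upper bounds in items \ref{item:thm:prior1} and \ref{item:thm:prior2} are by contrast routine: item \ref{item:thm:prior1} is a short explicit construction as above, and item \ref{item:thm:prior2} is quoted directly from \cite{AlonA20} after verifying that their parameters translate to our $\AAM$ notation (which \cref{obs:aa20} and the surrounding discussion already set up).
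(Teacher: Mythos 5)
Your treatment of item \ref{item:thm:prior1} has the right idea but you talk yourself out of it: the construction you dismissed, where block $r$ of the vector indexed by $\vec{a}$ has $j$-th entry $j \cdot a_r \bmod p$, is in fact correct and is essentially the paper's (the paper transposes the grouping, using $p$ blocks of length $\lceil\log_p N\rceil$, with block $m$ holding $m$ times the base-$p$ digits of the index). Your worry that ``$a_r$ could be $0$'' is harmless: a given pair only needs \emph{one} block $r$ with $a_r - b_r \ne 0$ to cover all of $\mathbb{Z}_p$, since the differences there are $j(a_r - b_r)$ for $j \in \mathbb{Z}_p$, and $a_r - b_r$ is a unit. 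Your earlier shift construction $(a_r, a_r + 1, \dots, a_r + p - 1)$ genuinely fails, as you noted, but you should have committed to the multiplicative version rather than deferring the indexing to ``the writeup.''

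For item \ref{item:thm:prior3}, your bound $\AAM(p,N) \ge p$ and your treatment of item \ref{item:thm:prior2} (quoting \cite{AlonA20} after a case split on $N$, with small $N$ handled by item \ref{item:thm:prior1}) match the paper. For $\AAM(p,N) \ge \log_{2+12/(p-6)} N$, however, you are pursuing a coordinate-by-coordinate recursion that you have not closed and that you yourself flag as the main obstacle. The paper uses a short packing argument that avoids any recursion: take $x$ to be the largest integer at most $p/4 - 1$, and for each $\vec{v} \in \mathcal{F}$ consider the ``ball'' of all $\vec{z} \in \mathbb{Z}_p^\ell$ with $z_i - v_i \in \{-x,\dots,x\} \pmod p$ for every $i$. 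If balls around distinct $\vec{v},\vec{w}$ overlapped at some $\vec{z}$, every coordinate difference $v_i - w_i$ would lie in $\{-2x,\dots,2x\}$, a set of size $4x+1 < p$, so the pair could not be $\mathbb{Z}_p$-covering. Hence the $N$ balls, each of size $(2x+1)^\ell$, are pairwise disjoint, forcing $N(2x+1)^\ell \le p^\ell$, and since $x > p/4 - 2$ we get $p/(2x+1) \le 2p/(p-6) = 2 + 12/(p-6)$, which is exactly the base you were unsure how to derive. Your recursive idea is not obviously wrong, but it is unfinished; the packing argument is shorter and explains the constant cleanly.
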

\begin{proof}
We prove each item in turn:
\begin{enumerate}
\item As $\AAM(p, N)$ is monotone in $N$, we can assume that $N$ is a power of $p$ without losing generality. Define $\ell = p \cdot \log_p(N)$ for convenience. For $i \in [N]$, define the $\ell$ length vector whose first $\ell/p$ coordinates are the representation of $i$ in base $p$, the second $\ell/p$ coordinates are the representation of $i$ in base $p$ multiplied by $2$ modulo $p$ and so on. It suffices to show that $(\ell, p)$-Family consisting of all these vectors in $\zp$-covering. Indeed, for any two vectors $i \neq i'$ differ in at least one coordinate of their base $p$-representation and let $\Delta \neq 0$ be the difference between the two values of this coordinate modulo $p$. By our construction, the vectors $i$, $i'$ cover all multiples of $\Delta$ modulo $p$ which is all of $\zp$ (as $p$ is a prime).

\item Note that if $N \leq p^{ p^{5\log_2\log_2 p} }$, the result follows from \cref{item:thm:prior1}. If not, we define $\ell = \log_{ \paren*{ 2-\frac{1}{\log_2 p} } } N$ and observe that $\ell > p^{5\log_2\log_2 p}$. By the definition of $\AAM(\cdot)$ (\cref{def:aam}, it suffices to show that there exists a $\mathbb{Z}_p$-covering $(\ell,p)$-Family of size $N$. This essentially is the result of \cite{AlonA20}, and is recapped as \cref{thm:aa20} below.

\item $\AAM(p,N) \geq p$ simply because in order for two distinct vectors to possibly be $\mathbb{Z}_p$-covering, they must have at least $p$ coordinates. We now show that $\AAM(p,N) \geq \log_{\left(2+\frac{12}{p-6}\right)}(N)$. To this end, let $x$ be the largest integer at most $p/4 - 1$ and let $\ell > 0$ be arbitrary such that there exists a $\zp$ covering $(\ell,p)$-Family $\mathcal{F}$ of size $N$. By our choice of $x$, we have that $x > p/4 - 2$ and there exists $y \in \zp$ that is not in the set $\{ -2x, \dots, 2x \} \pmod p$.

We first claim that for all $\vec{z} \in \zp^{\ell}$, there is at most one element $\vec{v}$ of $\mathcal{F}$ such that $z_i - v_i \in \{-x,\dots, x\} \pmod p$ for all $i \in [\ell]$. Indeed, if there were two distinct vectors $\vec{v} \neq \vec{w}$, then this pair only covers the set $\{ -2x, \dots, 2x \}$, and does not cover $y \in \zp$, a contradiction. With this claim and the fact that $x > p/4 - 2$, we can upper bound $N$ as:
\[
N \leq  \left(\frac{p}{2x+1}\right)^\ell \leq \left(\frac{2p}{p-6}\right)^\ell \implies \ell \geq \log_{\left(2+\frac{12}{p-6}\right)} N.
\]

\end{enumerate}
\end{proof}

\section{Proof of Main Result}
\label{sec:main_code}

\subsection{Overview of The Proof}

We now give a brief overview covering all of our main ideas. Our goal is to show \cref{thm:informal} that is a lower bound for prophet inequalities for the intersection of $q$ partition matroids. As mentioned in the introduction, we shall follow the approach of \cite{KleinbergW12,AlonA20} that says that such a lower bound follows if we prove strong enough upper bounds on the measure $\AAM(p,N)$ from \cref{def:aam}. Specifically, we shall employ \cref{prop:kw} that shows that if we have have a better than quadratic bound on $\AAM(p, p^p)$, say we show that $\AAM(p, p^p) \leq p^{2-\delta}$, then we also get $\alpha(\mathcal{C}_{\textsc{SymPartInt}}(p^{2-\delta}),\bernoulli) \geq \Omega(p)$, or equivalently, that 
\[
\alpha(\sympartint,\bernoulli) \geq q^{1/2 + O(\delta)} ,
\]
and \cref{thm:informal} follows.

\subsubsection{Getting a Quadratic Bound} 
\label{sec:overview:quadratic}

In order to understand how we get a better than quadratic bound on $\AAM(p, p^p)$, it will be instructive to first understand how to get a quadratic bound and show that $\AAM(p, p^p) \leq \tilde{O}(p^2)$ (such a bound, using a different argument, was also observed by the authors of \cite{KleinbergW12}). Recall from \cref{def:aam} that in order to show such a bound, we have to show that there exists a $\mathbb{Z}_p$-covering $(\tilde{O}(p^2),p)$-Family of size $p^p$. We construct such a family by starting with a $\set*{ 0, 1 }$-covering $(2,p)$-Family of size $2$, namely the family $\Ff_0$ consisting of the vectors $(0,0)$ and $(0,1)$, and using it to get families with better parameters. We define two different boosting operations:

\begin{enumerate}
\item \label{item:boost1} {\bf Size for length boosting:} This operation increases the size of the family by a power of $2$, {\em i.e.}, takes it from $N$ to $N^2$, at the cost of also increasing the length $\ell$ be a factor $2$. That is, we want to start with an $S$-covering $(\ell,p)$-Family $\Ff$ of size $N$ and get an $S$-covering $(2\ell,p)$-Family $\Ff'$ of size $N^2$. To do this, define the family $\Ff'$ to have all possible $N^2$ concatenations obtained by concatenating $2$ vectors from the family $\Ff$ and observe that $\Ff'$ satisfies all the required properties.
\item \label{item:boost2} {\bf Cover for length boosting:} This operation increases the number of elements covered by a factor of $2$ at the cost of also increasing the length $\ell$ be a factor $2$. That is, we want to start with an $S$-covering $(\ell,p)$-Family $\Ff$ of size $N$ and get an $S'$-covering $(2\ell,p)$-Family $\Ff'$ of size $N$, where $\card*{ S' } = 2 \cdot \card*{S}$. To do this, sample a uniformly random element $s \in \zp \setminus \set*{0}$ and define the family $\Ff'$ to have the vector $\paren*{ \vec{v}, s \cdot \vec{v} }$ for every vector $\vec{v} \in \Ff$. Observe that the new family $\Ff'$ shatters all the elements in $S' = S \cup s S$.\footnote{We define $s S$ to be the set $s S = \set*{ s \cdot s' \pmod p \mid s' \in S }$.}  and has length $2\ell$.

In general, it may happen that $\card*{ S' } < 2 \cdot \card*{S}$ if we are unlucky in our sample of $s$ or if the set $S$ that we started with was very large. Thus, this boosting operation is not without ``flaws''. Nonetheless, there are ways one can overcome these flaws when we use it in the actual construction by, say, ensuring all sets $S$ have a certain form and/or being clever in the choice of the element $s$. We elide these details for now and assume that this operation indeed satisfies $\card*{ S' } = 2 \cdot \card*{S}$.
\end{enumerate}

Starting from our family $\Ff_0$ and using these two operations $\log_2 p$ times (for a total of $2\log_2 p$ operations in total) each indeed gives us a $\mathbb{Z}_p$-covering $(\tilde{O}(p^2),p)$-Family of size $p^p$, as claimed.

\subsubsection{Getting a Better Bound} 
\label{sec:overview:better}
Our main idea towards getting a better bound is to ``combine'' the two operations in \cref{item:boost1,item:boost2} in order to save on some of the $2 \log_2 p$ operations. To this end, we recall the construction of \cite{AlonA20} (recapped in \cref{thm:aa20}) that shows that if $\ell$ is huge as compared to $p$, say $\ell \geq p^{5 \log_2 \log_2 p}$, then there exists a $\mathbb{Z}_p$-covering $(\ell,p)$-Family of size (almost) $2^{\ell}$. Observe that the above construction actually does better than simply using the operations in \cref{item:boost1,item:boost2}. Indeed, if we were to use the operations in \cref{item:boost1,item:boost2} to get the same parameters, we would end up with an $(\ell \cdot p, p)$-Family instead of the $(\ell,p)$-Family that they get. If $\ell = p^{5 \log_2 \log_2 p}$, the saving is a $\paren*{ 1 - O\paren[\big]{ \frac{1}{\log_2 \log_2 p } } }$ factor in the exponent, which is exactly what we want.

The problem is that their requirement that $\ell \geq p^{5 \log_2 \log_2 p}$ is already much larger than the quadratic bound we are hoping to beat, and therefore unaffordable. However, they also use this larger value of $\ell$ to get a family of size $2^{\ell}$ which is also much larger than the $p^p$ size family that we want to construct. Is it possible to get around their requirement at the cost of reducing the size of the obtained family?

The answer is yes, and our approach to do this starts by applying their construction for a value $k$ that satisfies $2^{ k^{ O( \log_2 \log_2 k) } } = p^p$. As $\log_2 \log_2 k$ and $\log_2 \log_2 p$ are the same order of magnitude, this does not affect our improvement in the bound. At the same time, this reduces their requirement to $\ell \geq k^{5 \log_2 \log_2 k} = p \log_2 p$ which is something we can afford and also keeps the size of the obtained family above $p^p$, as needed. However, the problem is that the obtained family does arithmetic over $\zk$ (instead of $\zp$) and is only $\zk$-covering (instead of $\zp$-covering). We fix these two problems next.

\begin{enumerate}
\item \label{item:solve1} {\bf Fixing the arithmetic:} Even though $\zk \subseteq \zp$ and thus, every element of $\zk$ can be ``naturally'' seen as an element of $\zp$, observe that we do not have the guarantee that a $\zk$-covering $(\ell,k)$-Family ``naturally'' implies a $\zk$-covering $(\ell,p)$-Family. This is because of the fact that the difference of two numbers $a$ and $b$ modulo $k$ may not be the same as their difference modulo $p$.

However, observe that if $a \geq b$ (as an integer), then it is indeed the case $a - b \pmod k$ is the same as $a - b \pmod p$, and thus the transformation follows naturally. However, if $a < b$ (as an integer), then $a - b \pmod k$ is the same as $a + k - b \pmod p$ and we do not get the required guarantee. The way we fix this is to replace $a$ by a vector of length $\log_2 k$ whose $j$-th entry, for $j \in [\log_2 k]$ is $a$, if the $j$-th bit in the binary representation of $a$ is $1$ and $a + k$ otherwise (and likewise for $b$).

Now, if $a < b$ (as an integer), there exists a $j \in [\log_2 k]$ such that the $j$-th bit in the binary representation of $a$ is $0$ and the $j$-th bit in the binary representation of $b$ is $1$. Then, the difference in the $j$-th entry of the vectors is exactly $a + k - b \pmod p$, as desired. This does blow up the length of the vector by a factor of $\log_2 k \leq \log_2 p$ but as we save a factor of $p^{\Omega\paren[\big]{ \frac{1}{\log_2 \log_2 p} } } \gg \log_2 p$ using the \cite{AlonA20} construction, this is affordable.

\item \label{item:solve2} {\bf Fixing the set covered:} It remains to boost the covered set from $\zk$ to $\zp$, and we do this using ideas similar to those described in the boosting operation in \cref{item:boost2} above. Specifically, we take $M = \frac{p \cdot \log_2 p}{k}$ random elements $s_1, \dots, s_M$ and replace each vector $\vec{v}$ in the original family with the vector $\paren[\big]{ s_1 \cdot \vec{v}, \dots, s_M \cdot \vec{v} }$.\footnote{Recall that we only want to increase the size of the set covered by a factor of $\frac{p}{k}$ and we increase the length by a factor of $M = \frac{p \cdot \log_2 p}{k}$. The extra $\log_2 p$ factor is again affordable as it is much smaller than $p^{\Omega\paren[\big]{ \frac{1}{\log_2 \log_2 p} } }$.} This ensures that new family covers the set $s_1 \zk \cup \dots \cup s_M \zk$, which can be shown to be equal to $\zp$ with non-zero probability. Thus, there exists a choice of $s_1, \dots, s_M$ such that the resulting family will be $\zp$-covering, as desired.

\end{enumerate}

\subsubsection{Limitations} 
\label{sec:overview:limitations}

We finish this section with some remarks on the limitations of our two boosting operations. Observe that our two operations are extremely simple, only requiring concatenation and scaling of vectors in the original family. On the other hand, the \cite{AlonA20} construction we use as a starting point is significantly more involved. One may wonder whether it is possible to get a better than quadratic bound using simple concatenation procedures alone and not work with the \cite{AlonA20} construction at all.

In \cref{sec:limitations}, we show that this is not possible, by studying a broad class of concatenation procedures, that we call agnostic, and showing that they do not give any significant improvement over applying the two boosting operations in \cref{item:boost1,item:boost2} separately. This shows not only that we must use something more involved like \cite{AlonA20} but also that our procedure to fix the set covered in \cref{item:solve2} in \cref{sec:overview:better} is almost tight amongst a large class of procedures. We conclude that the possible avenues towards better bounds on $\AAM(p,p^p)$ using a similar approach are:
\begin{inparaenum}[(a)]
\item a better starting point than~\cref{thm:aa20}, or 
\item a vastly different boosting procedure.
\end{inparaenum}

\subsection{A Key Theorem}

In this section, we prove \cref{thm:informal}. The core of the proof is the following improved bound on $\AAM(p, p^p)$:
\begin{theorem}
\label{thm:aam-newbound} 
For all primes $p > p_0$ large enough, we have:
\[ 
\AAM(p, p^p) \leq p^{ 2 - \Omega\paren*{ \frac{1}{ \log_2 \log_2 p } } } .
\]
\end{theorem}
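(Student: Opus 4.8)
By \cref{def:aam}, to prove \cref{thm:aam-newbound} it suffices to construct, for every large prime $p$, a $\zp$-covering $(L,p)$-Family of size at least $p^p$ with $L \le p^{2-\Omega(1/\log_2\log_2 p)}$. The elementary bound $\AAM(p,p^p)\le p^2$ from \cref{thm:prior} is wasteful because it effectively spends a whole factor of $p$ per ``digit'' of information, whereas the construction of \cite{AlonA20} recapped in \cref{thm:aa20} packs nearly one bit per coordinate — but only when the length $\ell$ is enormous compared to the modulus ($\ell \gtrsim p^{1+5\log_2\log_2 p}$, already larger than $p^2$). The plan is therefore to run \cref{thm:aa20} at a much smaller prime $k$, where its hypothesis becomes affordable, and then lift the resulting family back up to modulus $p$ and up to a full $\zp$-cover, each lift costing only a $\mathrm{poly}(\log p)$ factor in length.

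\emph{Step 1 (apply \cite{AlonA20} at a small modulus).} Pick a prime $k$ with $\log_2 k = \Theta(\log_2 p/\log_2\log_2 p)$ — which exists by Bertrand's postulate — chosen so that both $k^{1+5\log_2\log_2 k}$ and $\log_{(2-1/\log_2 k)}(p^p)$ are $\Theta(p\log_2 p)$. Taking $\ell = \Theta(p\log_2 p)$, \cref{thm:aa20} yields a $\zk$-covering $(\ell,k)$-Family $\Ff_1$ of size $\ge p^p$; note $\ell = p^{1+o(1)}$, well within budget. \emph{Step 2 (fix the arithmetic).} A $\zk$-covering family need not be $\zk$-covering over $\zp$, since $v_i-w_i \bmod k$ and $v_i-w_i \bmod p$ differ when $v_i < w_i$ as integers. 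As sketched in \cref{sec:overview:better}, replace each coordinate value $a\in[0,k-1]$ by a block of $\log_2 k$ coordinates whose $j$-th entry is $a$ if the $j$-th bit of $a$ is $1$ and $a+k$ otherwise (all entries lie in $[0,2k-1]\subseteq\zp$). For a pair realizing difference $s\in\zk$ at coordinate $i$: if $v_i\ge w_i$ the lifted difference is $s$ already; if $v_i<w_i$, at the top bit where they differ $v_i$ has a $0$ and $w_i$ a $1$, and that coordinate has difference $v_i+k-w_i=s$ in $\zp$. The map is injective, so we get a $\zk$-covering $(\ell\log_2 k,p)$-Family $\Ff_2$ of size $\ge p^p$.

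\emph{Step 3 (fix the cover).} Draw $M=\Theta(p\log_2 p/k)$ i.i.d.\ uniform $s_1,\dots,s_M\in\zp\setminus\{0\}$ and replace each $\vec v\in\Ff_2$ by $(s_1\vec v,\dots,s_M\vec v)$; the new family realizes every difference in $\bigcup_m s_m\cdot\{0,\dots,k-1\}\bmod p$. For a fixed nonzero $t\in\zp$, $\Pr[t\in s_m\{0,\dots,k-1\}]=(k-1)/(p-1)$, so a union bound over the $p$ targets shows the new family is $\zp$-covering with positive probability once $M\gtrsim p\ln p/k$. Fixing such $s_1,\dots,s_M$, we obtain a $\zp$-covering $(L,p)$-Family of size $\ge p^p$ with
\[
L \;=\; \ell\cdot\log_2 k\cdot M \;=\; \Theta\!\Big(\frac{p^2(\log_2 p)^2\log_2 k}{k}\Big).
\]
Since $k=p^{\Theta(1/\log_2\log_2 p)}$ while $(\log_2 p)^3 = p^{o(1/\log_2\log_2 p)}$, the polylog factors are swallowed by a slightly smaller constant in the exponent, giving $L=p^{2-\Omega(1/\log_2\log_2 p)}$, which proves \cref{thm:aam-newbound}; \cref{thm:informal} then follows via \cref{prop:kw} as explained in \cref{sec:overview:quadratic}.

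\emph{Main obstacle.} The two ``lifting'' steps are conceptually simple, but the work is in verifying they preserve the $S$-covering property exactly while inflating the length by only $\mathrm{poly}(\log p)$ — in particular the probabilistic argument of Step 3, which is the rigorous substitute for the flawed deterministic ``cover-for-length'' boosting of \cref{item:boost2}. The remaining difficulty is the parameter selection: $k$ must be small enough that $\ell=k^{1+5\log_2\log_2 k}$ meets \cref{thm:aa20}'s hypothesis, yet large enough that the output family still has size $\ge p^p$ and the final length $\ell\log_2 k\cdot M$ stays below $p^2$; all three constraints are met simultaneously precisely because $k=p^{\Theta(1/\log_2\log_2 p)}$.
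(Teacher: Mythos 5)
Your overall architecture matches the paper's proof closely: run the Alon--Alweiss construction (\cref{thm:aa20}) at a much smaller prime $k$ with $\log_2 k = \Theta(\log_2 p/\log_2\log_2 p)$, lift the resulting $\zk$-covering family to $\zp$-arithmetic via a bit trick, and then scale by $\Theta(p\log_2 p/k)$ multipliers chosen by the probabilistic method to expand coverage from $[0,k-1]$ to all of $\zp$. The paper packages the lift as \cref{lem:ff2} and the scaling step as \cref{lemma:primes} inside a more general statement (\cref{thm:upperbound}) that is then specialized to $N=p^p$, but the construction, the parameter choice for $k$, and the final length computation $L = \Theta\paren*{p^2(\log_2 p)^2\log_2 k / k} = p^{2-\Omega(1/\log_2\log_2 p)}$ are the same.

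There is, however, a genuine gap in your Step~2. You replace each coordinate $a\in[0,k-1]$ by exactly $\log_2 k$ bit-trick coordinates, and assert that when $v_i\ge w_i$ ``the lifted difference is $s$ already.'' This is not true of the bit-trick block alone: at a bit position $j$, each of $v_i,w_i$ is shifted by $k$ or not depending on its $j$-th bit, so the block difference is $s$ only at a position where $v_i$ and $w_i$ have the \emph{same} bit, and is $s\pm k$ at every position where they disagree. Two integers in $[0,k-1]$ can disagree on every bit in the relevant range (e.g.\ $v_i=10=1010_2$, $w_i=5=0101_2$), in which case the block realizes only $s-k$ and $s+k$, never $s$. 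The paper avoids this by keeping unchanged copies of the coordinate alongside the bit-trick copies (\cref{lem:ff2} uses $2\log_2 k$ copies with the last $\log_2 k$ left unchanged; its footnote notes only one unchanged copy is truly needed). When $v_i\ge w_i$, the unchanged coordinate directly realizes the integer difference $v_i-w_i = s$, which equals $s\pmod p$. Adding a single unchanged coordinate to each of your blocks (so $\log_2 k + 1$ coordinates per original coordinate) repairs the argument and does not affect the asymptotics; without it, Step~2's covering claim fails.
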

Before proving \cref{thm:aam-newbound}, we show why it implies \cref{thm:informal}.

\begin{proof}[Proof of \cref{thm:informal}]
The inequality $\alpha(\partint) \geq \alpha(\sympartint,\bernoulli)$ is straightforward from our definitions. Thus, it suffices to show that $\alpha(\sympartint,\bernoulli) \geq q^{1/2+\Omega(1/\log_2 \log_2 q)}$. Owing to \cref{prop:kw}, this follows if we show that for all large enough $q$, there exists $p \geq q^{1/2+\Omega(1/\log_2 \log_2 q)}$ such that $\AAM(p, p^p) \leq q$. This is because, by \cref{thm:aam-newbound} and with an appropriate choice of constants, we have:
\[
\AAM(p, p^p) \leq p^{ 2 - \Omega\paren*{ \frac{1}{ \log_2 \log_2 p } } } \leq q^{ \paren*{ \frac{1}{2} +\Omega\paren*{ \frac{1}{ \log_2 \log_2 q } } } \cdot \paren*{ 2 - \Omega\paren*{ \frac{1}{ \log_2 \log_2 p } } } } \leq q .
\]

\end{proof}

\subsection{Proof of \cref{thm:aam-newbound}}

We now prove \cref{thm:aam-newbound} by showing the following stronger theorem, that proves a bound on $\AAM(p, N)$, for general $N$. 
\begin{theorem}
\label{thm:upperbound} 
For all $p, N$ large enough that satisfy $N \leq 2^{ p^{ \log_2 \log_2 p} }$, we have:
\[
\AAM(p,N) \le p\log_2 p \cdot \paren*{ \log_2 N }^{ \paren*{ 1 - \Omega\paren*{ \frac{1}{ \log_2 \log_2 \log_2 N } } } } .
\]
\end{theorem}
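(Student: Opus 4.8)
The plan is to prove the stronger \cref{thm:upperbound} by constructing directly a $\zp$-covering $(\ell,p)$-Family of size $N$, so that $\AAM(p,N)\le\ell$ by \cref{def:aam}, following the three-step recipe sketched in \cref{sec:overview:better}. Write $n:=\log_2 N$ throughout. The first step is to run the \cite{AlonA20} construction over a much smaller prime modulus $k$: choose a prime $k$ with $\log_2 k=\Theta\!\big(\tfrac{\log_2 n}{\log_2\log_2 n}\big)$, the implied constant small enough that $k^{1+5\log_2\log_2 k}\le n^{0.3}$. Since the base $2-1/\log_2 k$ is below $2$ we have $\log_{(2-1/\log_2 k)}N\ge n>n^{0.3}\ge k^{1+5\log_2\log_2 k}$, so the \cite{AlonA20} bound recorded in the second item of \cref{thm:prior} applies (with $k$ in place of $p$) and yields a $\zk$-covering $(\ell_0,k)$-Family $\Ff$ of size $N$, where $\ell_0:=\lceil\log_{(2-1/\log_2 k)}N\rceil=O(n)$. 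Crucially $\ell_0$ is only $O(n)$, not $O(np)$ as the generic boosting of \cref{item:boost1,item:boost2} would give, and this factor-$\tfrac1p$ saving is what we are spending on the repairs below.

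The second step (\cref{item:solve1}) lifts $\Ff$ from $\zk$ to $\zp$. The obstruction is that the difference of two elements modulo $k$ need not equal their difference modulo $p$ when a subtraction wraps around. I would fix this with the encoding $\phi$ that sends $\vec v\in\zk^{\ell_0}$ to the concatenation of $\vec v$ itself, viewed in $\zp^{\ell_0}$, together with, for each bit position $j\in[\lceil\log_2 k\rceil]$, the vector $\psi_j(\vec v)$ whose $i$-th coordinate is $v_i$ if the $j$-th bit of $v_i$ equals $1$ and $v_i+k$ otherwise. This multiplies the length by $1+\lceil\log_2 k\rceil$ and is injective, and $\phi(\Ff)$ remains $\zk$-covering, now over $\zp$: given distinct $\vec v,\vec w\in\Ff$ and $t\in\{0,\dots,k-1\}$, pick a coordinate $i$ with $v_i-w_i\equiv t\pmod k$; if $v_i\ge w_i$ as integers then the copy of $\vec v$ itself has integer difference exactly $t$ at coordinate $i$, while if $v_i<w_i$ then at any bit position $j$ where $w_i$ has a $1$ and $v_i$ a $0$ (one exists) we get $\psi_j(\vec v)_i-\psi_j(\vec w)_i=(v_i+k)-w_i=t$ as integers; in either case $0\le t<k<p$, so this equals $t$ in $\zp$.

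The third step (\cref{item:solve2}) boosts the covered set from $\zk$ to all of $\zp$ by random scaling: take $M=\Theta(p\log_2 p/k)$ elements $s_1,\dots,s_M\in\zp\setminus\{0\}$ and replace each $\vec u\in\phi(\Ff)$ by $(s_1\vec u,\dots,s_M\vec u)$. The resulting family is injective of size $N$, has length $\ell=\ell_0(1+\lceil\log_2 k\rceil)M$, and covers $\bigcup_m s_m\zk$. A union bound over $z\in\zp\setminus\{0\}$ — each such $z$ lies in $s_m\zk$ for exactly $k-1$ of the $p-1$ nonzero values of $s_m$, hence is missed by all $M$ independent uniform choices with probability $e^{-\Omega(\log_2 p)}$, and one sums over the $p-1$ values of $z$ — shows that for a large enough constant in $M$ there is a choice of $s_1,\dots,s_M$ with $\bigcup_m s_m\zk=\zp$ (and $0\in s_m\zk$ always). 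Fixing such a choice gives a $\zp$-covering $(\ell,p)$-Family of size $N$, so $\AAM(p,N)\le\ell=\ell_0(1+\lceil\log_2 k\rceil)M=O\!\big(n\cdot\log_2 k\cdot\tfrac{p\log_2 p}{k}\big)$.

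To finish, substitute $\log_2 k=\Theta\!\big(\tfrac{\log_2 n}{\log_2\log_2 n}\big)$ and compute $n\cdot\tfrac{\log_2 k}{k}=n^{1-\Omega(1/\log_2\log_2 n)}$ (the polylogarithmic factor $\log_2 k$ is absorbed into the exponent), giving $\AAM(p,N)\le p\log_2 p\cdot(\log_2 N)^{1-\Omega(1/\log_2\log_2\log_2 N)}$ since $\log_2\log_2 n=\log_2\log_2\log_2 N$; \cref{thm:aam-newbound}, and hence \cref{thm:informal}, then follows by taking $N=p^p$. The step I expect to require the most care — and the only place the hypothesis $N\le 2^{p^{\log_2\log_2 p}}$ is used — is checking that the chosen $k$ is legal, namely $2k\le p$ so that the entries of $\phi(\vec v)$, which lie in $\{0,\dots,2k-1\}$, fit in $\zp$. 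This needs $\log_2 p\gtrsim\tfrac{\log_2 n}{\log_2\log_2 n}$, and the hypothesis, which reads $\log_2 n\le\log_2 p\cdot\log_2\log_2 p$, is exactly what forces it: if $\log_2 p<\tfrac{\log_2 n}{\log_2\log_2 n}$ then $\log_2 p<\log_2 n$, hence $\log_2\log_2 p<\log_2\log_2 n$, and multiplying the two inequalities gives $\log_2 p\cdot\log_2\log_2 p<\log_2 n$, contradicting the hypothesis. The remaining work — choosing the constants in the three steps consistently, the routine estimates on $\ell_0$ and $M$, an appeal to Bertrand's postulate to get a prime $k$ in the desired window, and the "$p,N$ large enough" edge conditions — is mechanical.
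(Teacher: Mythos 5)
Your proposal is correct and takes essentially the same route as the paper: it applies \cref{thm:aa20} over a smaller prime $k$ at the same $\log_2 k = \Theta(\log_2\log_2 N/\log_2\log_2\log_2 N)$ scale (the paper takes the largest prime with $k^{5\log_2\log_2 k}\le\log_2 N$), lifts the $\zk$-arithmetic to $\zp$ via the same binary-bit encoding as \cref{lem:ff2}, and boosts the covered set to $\zp$ by the same random-scaling probabilistic argument as \cref{lemma:primes}. The only differences are cosmetic: your lift uses $1+\lceil\log_2 k\rceil$ copies rather than $2\log_2 k$, which matches the paper's own footnoted remark that only one unchanged copy is needed, and your choice of $k$ carries a slightly different constant.
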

Indeed, \cref{thm:upperbound} implies \cref{thm:aam-newbound}, as plugging $N = p^p$ gives:
\[
\AAM(p, p^p) \leq \paren*{ p \log_2 p }^{ 2 - \Omega\paren*{ \frac{1}{ \log_2 \log_2 p } } } \leq p^{ 2 - \Omega\paren*{ \frac{1}{ \log_2 \log_2 p } } } .
\]
Thus, it suffices to show \cref{thm:upperbound}, which we do in the rest of this section. Fix $p, N$ as in the statement of \cref{thm:upperbound} and define $\ell_1 = 2 \log_2 N$ and $k$ to be the largest prime that satisfies $k^{5 \log_2 \log_2 k} \leq \log_2 N$. As  it is well known that there is a prime between $m$ and $2m$ for every integer $m$ and we have $N \leq 2^{ p^{ \log_2 \log_2 p} }$, our choice of $k$ implies that $\paren*{ \log_2 N }^{ \frac{1}{ 15\log_2 \log_2 \log_2 N } } \leq k \leq p$ and thus, \cref{thm:upperbound} follows if we show that:
\begin{equation}
\label{eq:to-show}
\AAM(p, N) \leq \frac{ p \log_2 p \log_2 N }{ \sqrt{k} } .
\end{equation}
Henceforth, we denote the right hand side in \cref{eq:to-show} by $\ell^*$. We start by applying \cref{thm:aa20} (which can be applied as $N$, and therefore $k$, is large enough) with $k$ and $\ell = \ell_1$ to get that there exists $\zk$-covering $\paren*{ \ell_1, k }$-Family $\Ff_1$ of size at least $N$.

The existence of $\Ff_1$ makes some progress towards~\cref{eq:to-show}, which by \cref{def:aam} requires us to show $\zp$-covering $\paren*{ \ell^*, p }$-Family of size $N$. However, a key difference is that the arithmetic in the family $\Ff_1$ is done modulo $k$ while we desire a family with arithmetic modulo $p$. We show how to do this in the next lemma, that also blows up $\ell_1$ by a small factor. Define $\ell_2 = 2 \ell_1 \cdot \log_2 k$.

\begin{lemma}
\label{lem:ff2} 
There exists a $[0, k-1]$-covering $(\ell_2,p)$-family $\Ff_2$ of size $N$.
\end{lemma}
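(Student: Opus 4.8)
The plan is to obtain $\Ff_2$ from $\Ff_1$ by a \emph{fixed} coordinate-expansion that does not depend on the chosen pair of vectors, exactly along the lines of \cref{item:solve1} in \cref{sec:overview:better}. First pass to a subfamily of $\Ff_1$ of size exactly $N$ (\cref{thm:aa20} gives size at least $N$). Write $\zk=[0,k-1]$ and $\zp=[0,p-1]$ so that $\zk\subseteq\zp$ as sets of integers (recall $k\le p$ was already established). For each $\vec v\in\Ff_1$ define $\phi(\vec v)\in\zp^{\ell_2}$ whose coordinates fall into two groups, indexed by the original coordinate $i\in[\ell_1]$: \textbf{(i)} a plain copy, the coordinate $v_i$ itself; and \textbf{(ii)} for each bit position $j\in[\lceil\log_2 k\rceil]$, the coordinate equal to $v_i$ if the $j$-th binary digit of $v_i$ is $1$, and $v_i+k$ otherwise. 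This uses $\ell_1(1+\lceil\log_2 k\rceil)\le \ell_2=2\ell_1\log_2 k$ coordinates for $k$ large (pad with repeated coordinates to reach $\ell_2$ exactly; duplication cannot hurt the covering property), and $\Ff_2=\{\phi(\vec v):\vec v\in\Ff_1\}$ has size $N$ because the plain block already recovers $\vec v$, so $\phi$ is injective.

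Next I would verify that $\Ff_2$ is $[0,k-1]$-covering. Fix distinct $\vec v,\vec w\in\Ff_1$ and $s\in[0,k-1]$. Since $\Ff_1$ is $\zk$-covering, pick $i\in[\ell_1]$ with $v_i-w_i\equiv s\pmod k$, where $v_i,w_i\in[0,k-1]$ as integers. If $v_i\ge w_i$, then $v_i-w_i=s$ holds as integers (it lies in $[0,k-1]$ and is $\equiv s\pmod k$), so the plain coordinate $i$ of $\phi(\vec v),\phi(\vec w)$ differs by $s$ in $\zp$ (using $0\le s<k\le p$). If $v_i<w_i$, then $s=v_i-w_i+k$ as integers; let $j^*$ be the most significant bit position on which $v_i$ and $w_i$ disagree, so that the $j^*$-th bit of $v_i$ is $0$ and that of $w_i$ is $1$. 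Then in group (ii) at index $(i,j^*)$, $\phi(\vec v)$ has entry $v_i+k$ while $\phi(\vec w)$ has entry $w_i$, and the difference is $(v_i+k)-w_i=s$ in $\zp$ (again $0\le s<k\le p$). In both cases some coordinate realizes the difference $s$; since $s\in[0,k-1]$ and the pair were arbitrary, $\Ff_2$ is $[0,k-1]$-covering, which completes the proof.

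The only genuinely non-obvious point — and the reason the bit-indexed coordinates are present — is the wrap-around case $v_i<w_i$: the naive inclusion $\zk\hookrightarrow\zp$ preserves differences only when $v_i\ge w_i$, and recovering the residue $s=v_i-w_i+k\pmod p$ otherwise is precisely what group (ii) accomplishes. The elementary fact that makes it go through is that $v_i<w_i$ forces a coordinate on which $v_i$ has bit $0$ and $w_i$ has bit $1$ (their most significant differing bit), i.e.\ exactly the bit position whose coordinate shifts $v_i$ up by $k$ while leaving $w_i$ fixed. Everything else — injectivity of $\phi$, the coordinate count being at most $\ell_2$, and $s$ not wrapping mod $p$ because $k\le p$ — is routine. (The degenerate case $k=p$ is also fine: then $\zk=\zp$, group (ii) is vacuous, and $\phi$ simply embeds $\Ff_1$ verbatim.)
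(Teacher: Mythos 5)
Your proposal is correct and follows essentially the same approach as the paper: you form $\ell_2$ coordinates by taking, for each original coordinate, a plain (unmodified) copy plus one copy per bit position where the $j$-th copy shifts by $k$ exactly when the $j$-th bit is $0$, and the covering case analysis (plain copy handles $v_i\ge w_i$, a bit-indexed copy handles the wrap-around $v_i<w_i$) is the same as the paper's, with the minor cosmetic differences that you explicitly trim $\Ff_1$ to size exactly $N$, pad to length $\ell_2$, and name the witnessing bit as the most significant differing bit rather than merely asserting its existence.
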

\begin{proof}
For every $\vec{u} \in \Ff_1$ (which we interpret as an element of $\zp^{\ell_1}$), we construct a vector $\vec{\sfv} \in \zp^{\ell_2}$ and define $\Ff_2$ to be the set containing all the constructed $\vec{\sfv}$. To construct $\vec{\sfv}$ from $\vec{u}$, we use the following procedure: To start, define $\vec{\sfv}$ to be $2 \log_2 k$ copies of $\vec{u}$, concatenated to each other. We keep the last $\log_2 k$ copies unchanged\footnote{Actually, only $1$ out of these $\log_2 k$ are needed to make the argument work. The remaining are used only to ensure that the length of $\vec{\sfv}$ is as needed.} but update the first $\log_2 k$ copies. In this update, for copy $j \in [\log_2 k]$, we add $k$ to every coordinate whose $j$-th bit in its binary representation is $0$. More formally, for $i \in [\ell_1]$, if the $j$-th bit in the binary representation of $u_i$ is $0$, we set coordinate $i$ in copy $j$ to be $u_i + k \pmod p$, and keep it unchanged as $u_i$ otherwise.

Due to the unchanged copies, the constructed $\vec{\sfv}$ are all distinct, and thus the size of $\Ff_2$ is $N$, as needed. It remains to show that $\Ff_2$ is $[0, k-1]$-covering. Consider any two distinct vectors $\vec{\sfv}, \vec{\sfv}' \in \Ff_2$ and let $u, u'$ be the elements of $\Ff_1$ they are constructed from. We need to show that for all $k' \in [0, k-1]$, there is a coordinate where $\vec{\sfv}$ and $\vec{\sfv}'$ differ by $k$ modulo $p$.  For this, note first that as $\Ff_1$ is a $\zk$-covering $\paren*{ \ell_1, k }$-Family, there exists a coordinate $i \in [\ell_1]$ such that $u_i - u'_i = k' \pmod k$. Viewing $u_i$ and $u'_i$ as integers, this implies that either $u_i - u'_i = k'$ or $u_i - u'_i = k' - k$. As the former case implies that $u_i - u'_i = k' \pmod p$, we are done by taking coordinate $i$ in the last copy of $\vec{\sfv}$ and $\vec{\sfv}'$.

For the latter case, note that this only happens if $u_i < u'_i$ (as an integer). Thus, there exists $j \in [\log_2 k]$ such that the $j$-th bit in the binary representation $u_i$ is $0$ and the $j$-th bit in the binary representation $u'_i$ is $1$. This means that coordinate $i$ in the copy $j$ of $\vec{\sfv}$ is $u_i + k \pmod p$ and coordinate $i$ in the copy $j$ of $\vec{\sfv}'$ is $u'_i$. It follows that the difference is $k' \pmod p$ as desired.
\end{proof}

Finally, we use $\Ff_2$ to construct an $\zp$-covering $\paren*{ \ell^*, p }$-Family of size $N$, finishing the proof. We will do this by creating many copies of all vectors in $\Ff_2$ and ``scaling'' each copy appropriately. To show that such a scaling is possible, we need the following lemma:

\begin{lemma}
\label{lemma:primes}
There exists a set $S \subseteq \zp$ with $|S| \leq \frac{p \ln p}{k-1}$ such that for all $g \in \zp$, there exist $i \in [0, k-1]$ and $j \in S$ such that $i \cdot j = g \pmod p$. 
\end{lemma}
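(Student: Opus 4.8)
The plan is to find $S$ by a probabilistic argument: pick the elements of $S$ uniformly at random from $\zp$ and show that, with positive probability, the collection of ``fans'' $\{i\cdot j \bmod p : i\in[0,k-1]\}$ as $j$ ranges over $S$ covers all of $\zp$. First I would handle the trivial element $g=0$ separately (take $i=0$, any $j$), so it suffices to cover $\zp\setminus\{0\}$. Fix a nonzero target $g$. For a uniformly random $j\in\zp$, the event that $g$ lies in the fan of $j$, i.e. that $g = i\cdot j\pmod p$ for some $i\in[0,k-1]$, is the event that $g\cdot j^{-1}\in[0,k-1]$ when $j\neq 0$; since $j\mapsto g\cdot j^{-1}$ is a bijection on $\zp\setminus\{0\}$, this event has probability $\tfrac{k-1}{p-1}\ge \tfrac{k-1}{p}$ (the value $0$ for $g j^{-1}$ is excluded since $g\neq 0$, but we only need a lower bound). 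Hence if $S$ consists of $m$ independent uniform samples, the probability that $g$ is covered by none of them is at most $\bigl(1-\tfrac{k-1}{p}\bigr)^{m}\le e^{-m(k-1)/p}$.

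Next I would take a union bound over the fewer than $p$ choices of $g\in\zp\setminus\{0\}$: the probability that some $g$ is left uncovered is at most $p\cdot e^{-m(k-1)/p}$, which is strictly less than $1$ as soon as $m > \tfrac{p\ln p}{k-1}$. Choosing $m = \bigl\lceil \tfrac{p\ln p}{k-1}\bigr\rceil$ (or more carefully $m=\lfloor \tfrac{p\ln p}{k-1}\rfloor+1$ and checking the inequality is strict) gives a positive probability that the random $S$ covers everything, so such an $S$ exists; discarding repeated samples only shrinks $|S|$, so $|S|\le \tfrac{p\ln p}{k-1}$ after accounting for the ceiling — I would state the bound as $|S|\le \tfrac{p\ln p}{k-1}$, absorbing the additive $O(1)$ into a slightly loose constant or noting $k-1\le p$ makes the rounding negligible for large $p$.

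The only mild subtlety — and the place to be careful rather than an actual obstacle — is the bookkeeping that the exponent $\tfrac{k-1}{p-1}$ versus $\tfrac{k-1}{p}$ and the ceiling in $m$ do not push $|S|$ above the claimed $\tfrac{p\ln p}{k-1}$; this is why I would keep the probability bound $p\,e^{-m(k-1)/p}$ and solve for $m$ cleanly, and if needed weaken the stated bound by an irrelevant constant factor, which does not affect the application in \eqref{eq:to-show} since we have a $p^{\Omega(1/\log_2\log_2 p)}$ slack. An alternative deterministic route is to observe that the fans are cosets-of-an-interval and use a greedy/counting argument, but the probabilistic argument is shortest and matches the logarithmic overhead in the statement, so that is the one I would write up.
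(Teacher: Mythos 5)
Your proposal is correct and matches the paper's proof essentially step for step: both use the probabilistic method with $\frac{p\ln p}{k-1}$ uniform samples from $\zp$, handle $g=0$ trivially via $i=0$, compute the single-sample coverage probability for nonzero $g$ as $(k-1)/p$ (the paper phrases this as the chance of hitting one of $g\cdot 1^{-1},\dots,g\cdot(k-1)^{-1}$; your bijection argument is the same thing), and close with a union bound over the $p$ targets, giving $(1-\tfrac{k-1}{p})^{p\ln p/(k-1)} < 1/p$. The paper also brushes the integrality of $\frac{p\ln p}{k-1}$ under the rug with a remark about absorbing a small constant, just as you do.
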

\begin{proof}
We prove this via the probabilistic method. Draw $\frac{p \ln p}{k-1}$ elements (as $k \leq p$, $\frac{p \ln p}{k-1}$ can be made an integer by multiplying by a small constant) from $\zp$ uniformly at random with replacement and let $S$ be the set of these elements. Clearly, we have $|S| \leq \frac{p \ln p}{k-1}$ and it suffices to show that the probability that there exists $g \in \zp$ such that for all $i \in [0, k-1]$, $j \in S$ we have $i \cdot j \neq g \pmod p$ is strictly smaller than $1$. For this we union bound over all the $p$ values of $g$ and show that for a fixed $g$, the probability that for all $i \in [0, k-1]$, $j \in S$ we have $i \cdot j \neq g \pmod p$ is strictly smaller than $1/p$.

This is clearly true for $g = 0$ as the fact that $i$ can be $0$ implies the stated event will never happen. If $g \neq 0$, the probability the stated even happens is exactly the probability that none of the elements $g \cdot 1^{-1}, \dots, g \cdot (k-1)^{-1}$  (inverses modulo $p$) are ever sampled, which is $\paren*{ 1 - \frac{k-1}{p} }^{\frac{p \ln p}{k-1}} < \mathrm{e}^{ - \ln p } = 1/p$.
\end{proof}

We are now ready to finish the proof of \cref{thm:upperbound}.

\begin{proof}[Proof of \cref{thm:upperbound}]
As hinted above, we create many copies of all vectors in $\Ff_2$ and scale each copy appropriately. Specifically, let $S$ be the set promised by \cref{lemma:primes} and let $s_1, \dots, s_{\card*{S}}$ be the elements of $S$. Define $\ell_3 = \card*{S} \cdot \ell_2$. Construct an $(\ell_3,p)$-family $\Ff_3$ of size $N$ by constructing, for every $\vec{\sfv} \in \Ff_2$, a vector $\vec{\sfw} = \paren*{ s_1 \cdot \vec{\sfv}, \dots, s_{\card*{S}} \cdot \vec{\sfv} }$ and adding it to $S$. As \cref{lemma:primes} implies that there are non-zero elements in $S$, the constructed $\vec{\sfw}$ are different for every $\vec{\sfv}$, and thus the size of $\Ff_3$ equals that of $\Ff_2$, which is $N$.

We claim that $\Ff_3$ is $\zp$-covering. Consider any two distinct vectors $\vec{\sfw}, \vec{\sfw}' \in \Ff_3$ and let $\vec{\sfv}, \vec{\sfv}'$ be the elements of $\Ff_2$ they are constructed from. We need to show that for all $g \in \zp$, there is a coordinate where $\vec{\sfw}$ and $\vec{\sfw}'$ differ by $g$ modulo $p$. Let $i \in [0, k-1]$ and $j \in \card*{S}$ be such that $i \cdot s_j = g \pmod p$ and note that these exist by \cref{lemma:primes}. Note first that as $\Ff_2$ is a $[0, k-1]$-covering $\paren*{ \ell_2, p }$-Family, there exists a coordinate $a \in [\ell_2]$ such that $\vec{\sfv}_a - \vec{\sfv}'_a = i \pmod p$. This means that in copy $j$ of $\vec{\sfw}$ and $\vec{\sfw}'$, the $i$-th coordinates differ by $s_j \cdot \paren*{ \vec{\sfv}_a - \vec{\sfv}'_a } = i \cdot s_j = g \pmod p$, as desired. As $k$ is large enough, we also have:
\[
\ell_3 = \card*{S} \cdot \ell_2 \leq \frac{p \ln p}{k-1} \cdot \ell_2 = \frac{p \ln p}{k-1} \cdot 4 \cdot \log_2 k \cdot \log_2 N < \ell^* ,
\]
and \cref{eq:to-show} follows and we are done.

\end{proof}

\subsection{Limitations of Our Approach}
\label{sec:limitations}

This section fleshes out the limitations of our approach that we discussed at a high level in \cref{sec:overview:limitations}. We start by defining an agnostic concatenation procedure.

\begin{definition}[Concatenation Procedure]

Let $z, k, k', p$ be integers such that $k \leq k' \leq p$. A $(z, k, k', p)$-{\em agnostic concatenation procedure} takes as input a sequence $(\alpha_1,\ldots, \alpha_z)$, where each $\alpha_j \in \mathbb{Z}_p \setminus \set*{0}$ and an $S$-covering $(\ell,p)$-Family $\mathcal{V}$, for some $\ell > 0$ and some set $S \subseteq \mathbb{Z}_p \setminus \set*{0}$ of size $k$\footnote{That $0 \notin S$ is without loss of generality as getting a $0$-covering family is easy.} and outputs a set $S' \subseteq \mathbb{Z}_p \setminus \set*{0}$ of size $k'$ and the largest $S'$-covering $(z\ell,p)$-Family $\mathcal{V}'$ that is a subset of the set $\mathcal{W}:=\{(\alpha_1\vec{v}_1,\ldots, \alpha_z \vec{v}_z) \mid \forall j \in [z] :  \vec{v}_j \in \mathcal{V} \}$. 

For $\eta > 0$, we say that the concatenation procedure boosts the size by $\eta$ if $\card*{ \mathcal{V}' } \geq \card*{ \mathcal{V} }^{\eta}$ regardless of the choice of $\mathcal{V}$ (and therefore, also regardless of $\ell$ and $S$).

\end{definition}

Observe that the two boosting procedures mentioned in \cref{sec:overview:quadratic} are indeed agnostic concatenation procedures. The first one is a $(2, k, k, p)$-agnostic concatenation procedure (for all integers $k$) that boosts the size by $2$ as it increases the length by a factor of $2$ and squares the size. The second one is a $(2, k, 2k, p)$-agnostic concatenation procedure (for all integers $k < \sqrt{p}$) that boosts the size by $1$ as it increases length by a factor of $2$ and keeps the size unchanged. Note that we assume $k < \sqrt{p}$ in the second one as that ensures that for all sets $S \subseteq \zp \setminus \set*{0}$ of size $k$, there exists $s \in \zp$ such that $S$ and $sS$ are disjoint, as needed for \cref{item:boost2} in \cref{sec:overview:quadratic}. To avoid such issues, we assume that $k$ is small enough in this section.

By combining these two procedures in sequence, it is possible to get for any integers $a$ and $b$, a $(2^{a+b}, k, 2^bk, p)$-agnostic concatenation procedure that boosts the size by $2^a$. We show that, up to constants, this is the best possible bound for all agnostic concatenation procedures.

\begin{proposition}
\label{prop:concatenation}
Let $z, k, k', p$ be integers and $k \leq k' \leq p$. Every $(z, k, k', p)$-agnostic concatenation procedure boosts the size by at most $\frac{4kz}{k'}$ (even for small $k$).
\end{proposition}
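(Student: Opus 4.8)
The plan is to exhibit a single ``bad'' input family $\mathcal{V}$ on which any $(z,k,k',p)$-agnostic concatenation procedure is forced to output a small family $\mathcal{V}'$, giving an upper bound of $\card{\mathcal{V}'} \le \card{\mathcal{V}}^{4kz/k'}$. The key structural observation is this: the procedure must output an $S'$-covering family with $\card{S'} = k'$, and $S'$-covering is a very stringent requirement when $S'$ is large. So I would look for a family $\mathcal{V}$ which is $S$-covering for some specific small set $S$ of size $k$, but for which \emph{no} large subfamily of $\mathcal{W} = \{(\alpha_1\vec v_1,\dots,\alpha_z\vec v_z)\}$ can be $S'$-covering for any $S'$ of size $k'$, no matter how the $\alpha_j$'s are chosen. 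The natural candidate (in the spirit of item~\ref{item:thm:prior3} of \cref{thm:prior}, the lower-bound argument on $\AAM$) is a family built from vectors lying in a short interval: if $\mathcal{V}$ is chosen so that every pair in $\mathcal{V}$ is $S$-covering only ``barely,'' then concatenating $z$ scaled copies can cover at most a set of size roughly $kz$ worth of residues via each coordinate block, and a pigeonhole/interval argument then caps how many vectors can pairwise achieve all $k'$ required differences.

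Concretely, I would first set up $\mathcal{V}$ as follows. Take $\ell$ large and let $\mathcal{V}$ consist of vectors whose coordinates lie in $\{0,1,\dots,m-1\}$ for a suitable $m$, chosen so that $\mathcal{V}$ is $S$-covering for $S = \{1,2,\dots\}$ of size exactly $k$ — e.g., using a base-$m$ type encoding so the coordinatewise differences realize exactly a prescribed arithmetic-progression-like set of size $k$, while $\card{\mathcal{V}}$ can be taken as large as we want (this is where the ``regardless of the choice of $\mathcal{V}$'' in the definition of ``boosts the size by $\eta$'' is used — we get to pick the worst $\mathcal{V}$, and in particular one of arbitrarily large size so that $\card{\mathcal{V}}^{\eta}$ is a meaningful target). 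Then $\mathcal{W}$ consists of vectors $(\alpha_1 \vec v_1,\dots,\alpha_z\vec v_z)$; in block $j$, a pairwise difference is $\alpha_j(\vec v_j - \vec w_j)$, whose realizable coordinate values lie in $\alpha_j \cdot \{-(m-1),\dots,m-1\}$. So the set of residues that \emph{any} pair in $\mathcal{W}$ can possibly cover (over all $\ell$ coordinates and all $z$ blocks) is contained in $\bigcup_{j=1}^z \alpha_j\cdot\{-(m-1),\dots,m-1\}$, a set of size at most $z\cdot(2m-1)$. For the procedure's output to have $\card{S'} = k'$ we therefore need $k' \le z(2m-1)$, i.e. $m \ge k'/(2z)$ roughly. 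Now I run the same counting as in the $\AAM$ lower bound: inside $\mathcal{V}'$, fix a block $j$ and a ``center'' vector; any other vector in $\mathcal{V}'$ whose $j$-th block differs coordinatewise from it by something in a window of size $2x+1$ around $0$ is essentially pinned, and by choosing $S' \ni y$ outside the $\pm 2x$ window one gets that at most one vector per such window survives — giving $\card{\mathcal{V}'} \le (\text{something like } m/(2x+1))^{\ell}$ per block, which combined across the structure forces $\card{\mathcal{V}'} \le \card{\mathcal{V}}^{O(kz/k')}$.

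Let me restate the heart of it more carefully, since that is the step I expect to be the main obstacle: turning ``$S'$ must be large and is confined to $\bigcup_j \alpha_j\{-(m-1),\dots,m-1\}$'' into a quantitative bound $\card{\mathcal{V}'}\le\card{\mathcal{V}}^{4kz/k'}$. The cleanest route is probably: (i) since $\card{S'}=k'$ but the union of the $z$ dilated intervals has size $\le z(2m-1)$, at least one block $j^*$ must ``carry'' $\ge k'/z$ distinct required residues, all of the form $\alpha_{j^*}\cdot(\text{difference of two interval-coordinates})$; (ii) restricting attention to block $j^*$, the family $\mathcal{V}'$ projected there is an $(\ell,p)$-family whose vectors have coordinates in an interval of length $m$ and which must be $T$-covering for a set $T$ of size $\ge k'/z$ with $T \subseteq \alpha_{j^*}\{-(m-1),\dots,m-1\}$; (iii) apply the interval-packing bound (identical in spirit to \cref{thm:prior}\ref{item:thm:prior3}): such a family has size at most $(m/\Theta(k'/z))^{\ell}$, while the original $\mathcal{V}$ has size roughly $m^{\ell}$ (or $m^{\ell/k}$-ish depending on the encoding), and taking logs gives the exponent bound $\le 4kz/k'$ after bookkeeping. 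The main obstacle is making the encoding of $\mathcal{V}$ and the relationship between $m$, $\ell$, $k$, and $\card{\mathcal{V}}$ precise enough that the two ``$m^{\ell}$''-type quantities line up to yield exactly the constant $4$ (rather than some larger constant), and handling the fact that the adversary picks the $\alpha_j$'s \emph{after} seeing that we committed to the worst-case $\mathcal{V}$ — so the argument must hold for every choice of $\alpha_1,\dots,\alpha_z$, which is actually in our favor since we only need one bad $\mathcal{V}$ but must survive all $\vec\alpha$. I would manage this by choosing $\mathcal{V}$ so that its covered set $S$ is itself a dilate-robust set (e.g. a multiplicative-structure-free arithmetic progression), ensuring no choice of $\alpha_j$ can ``compress'' many required differences into few coordinates.
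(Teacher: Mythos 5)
Your high-level instinct is right and matches the paper's: feed in a worst-case $S$-covering family whose coordinates live in a short interval (the paper uses $\mathcal{F}_2$ from \cref{lem:ff2}, with coordinates in $[0,2k-1]$), so that in each block $j$ the set of achievable differences $S_j=\{\alpha_j(v_i-w_i)\}$ has size $<4k$. But the counting step is where you and the paper diverge, and your version has a genuine gap.

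You apply the pigeonhole ``some block $j^*$ carries $\ge k'/z$ residues of $S'$'' and then claim (step (ii)) that the projection of $\mathcal{V}'$ onto block $j^*$ must be $T$-covering for that set $T$ of size $\ge k'/z$. That inference does not hold: a pair of vectors in $\mathcal{V}'$ can cover a residue $t\in T$ via a coordinate in a \emph{different} block $j'$ whenever $t\in\alpha_{j'}\cdot[1-2k,2k-1]$, which will typically happen since the dilated intervals $S_j$ overlap. So the projection need be neither $T$-covering nor even injective, and the interval-packing bound of \cref{thm:prior}(\ref{item:thm:prior3}) cannot be invoked. Trying to repair this by redefining $T$ as ``residues carried \emph{only} by block $j^*$'' breaks the pigeonhole in step (i), because that restricted count can be tiny when the $S_j$ overlap heavily. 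There is also a secondary issue you flag yourself: even in the best case your route produces $(m/\Theta(k'/z))^\ell$-type bounds that must be massaged against $|\mathcal{V}|$, and it is unclear this lands on the clean exponent $4kz/k'$.

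The paper uses the dual pigeonhole and it makes all of these problems vanish: since $\sum_j|S_j|<4kz$ while $|S'|=k'$, some single residue $g'\in S'$ lies in $S_j$ for at most $4kz/k'$ indices $j$; call that index set $T$. Now for \emph{any} two vectors in $\mathcal{V}'$, to be $S'$-covering they must realize difference $g'$ somewhere, which forces them to differ in some block $j\in T$. Hence the map from $\mathcal{V}'$ to its coordinates on blocks in $T$ is injective, giving $|\mathcal{V}'|\le|\mathcal{V}|^{|T|}\le|\mathcal{V}|^{4kz/k'}$ directly, with no projection/injectivity subtleties and no auxiliary packing lemma. If you swap your block-pigeonhole for this residue-pigeonhole, the rest of your setup (the short-interval family, the observation that $|S_j|<4k$) goes through essentially as you intended.
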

\begin{proof}
Fix an $(z, k, k', p)$-agnostic concatenation procedure and a sequence $(\alpha_1, \dots, \alpha_z)$. Let $\Vv$ be the $[0, k-1]$-covering $(\ell_2,p)$-family $\Ff_2$ constructed in \cref{lem:ff2}. We run the procedure on $(\alpha_1, \dots, \alpha_z)$ and $\Vv$ and let $S'$ and $\Vv'$ be its output.

Observe that every coordinate of every vector in $\Vv$ lies in $[0,2k-1] \pmod p$. This immediately implies that for any $\vec{v},\vec{w} \in \Vv$ and every coordinate $i$, we have $v_i - w_i \in [1-2k,2k-1] \pmod p$, and in particular there are at most $4k-1$ possibilities. Use this to conclude that for all $j \in [z]$, the set $S_j = \set*{ \alpha_j \cdot \paren*{ v_i - w_i } \mid i \in [\ell_2], \vec{v},\vec{w} \in \Vv }$ satisfies $\card*{ S_j } < 4k$.

Next, define $S'$ to be the set output be the $(z, k, k', p)$-agnostic concatenation procedure and recall that $\card*{ S' } = k'$ and $0 \notin S'$. As $\card*{ S_j } < 4k$ for all $j \in [z]$, there exists a $g' \in S'$ such that $g' \in S_j$ for at most $\frac{4kz}{k'}$ many values of $j \in [z]$. Define $T = \set*{ j \in [z] \mid g' \in S_j }$ to be the set of these values and note that $\card*{T} \leq \frac{4kz}{k'}$. 

Now, consider any two elements $\vec{\sfv}:=(\alpha_1 \vec{v}_1,\ldots, \alpha_z \vec{v}_z),\vec{\sfw}:=(\alpha_1\vec{w}_1,\ldots, \alpha_z \vec{w}_z)$ of the set $\mathcal{V}'$ output by the $(z, k, k', p)$-agnostic concatenation procedure. By definition, the pair $(\vec{\sfv}, \vec{\sfw})$ is $S'$ covering and therefore, there exists $j \in [z]$ and $i \in [\ell_2]$ such that $\alpha_j \cdot \paren*{ \vec{v}_{j, i} - \vec{w}_{j, i} } = g' \pmod p$. As $g' \in S'$ implies $g' \neq 0$, this is only possible if $\vec{v}_j \neq \vec{w}_j$ and $g' \in S_j \implies j \in T$. Overall, we get that for any two vectors $\vec{\sfv}, \vec{\sfw} \in \mathcal{V}'$, there exists $j \in T$ such that $\vec{v}_j \neq \vec{w}_j$.

However, this means that $\card*{ \mathcal{V}' } \leq \card*{ \Vv }^{ \card*{T} } \leq \card*{ \Vv }^{ \frac{4kz}{k'} }$ and the lemma follows. 
\end{proof}

\section{Summary of \cite{AlonA20}}\label{sec:summary_AlonA20}

The goal of this section is to prove the following theorem, which is a quantitative statement of the main result of \cite{AlonA20}.

\begin{restatable}[]{theorem}{restateAAthm}
\label{thm:aa20}
There exists a sufficiently large $p_1$ such that for all primes $p>p_1$ and all $\ell \geq p^{5 \log \log p}$, there is a $\zp$-covering $\paren*{ \ell, p }$-Family of size at least $\paren*{ 2 - \frac{1}{ \log p } }^{\ell}$.
\end{restatable}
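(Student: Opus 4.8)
The plan is to construct the $\zp$-covering family explicitly as a carefully chosen collection of polynomials over $\zp$, following the approach of~\cite{AlonA20}. The basic idea is that a vector $\vec{v} \in \zp^\ell$ can be indexed by the points of a large structured domain, and if we take $\vec{v}$ to be the evaluation vector of a low-degree polynomial $f$ (so $v_x = f(x)$), then the difference vector $v_x - w_x = (f-g)(x)$ inherits the structure of $f-g$. To be $S$-covering for $S = \zp$, we need: for every pair of distinct polynomials $f,g$ in our family and every target $s \in \zp$, some coordinate $x$ with $f(x) - g(x) = s$. Since $f - g$ is itself a nonzero low-degree polynomial, this amounts to controlling the value distribution of low-degree polynomials — ensuring that every nonzero polynomial arising as a difference hits \emph{every} residue. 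The technical heart is to engineer the family (via a product/recursive construction over many ``coordinates blocks'', each block handling a bounded range of targets, combined with a counting argument) so that the number of polynomials is close to $(2 - 1/\log p)^\ell$ while $\ell$ need only be polynomial in $p$ (specifically $\ell \geq p^{5\log\log p}$).

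The key steps, in order, are as follows. First, I would fix the recursion depth and block structure: write $\ell$ as a product of roughly $\log\log p$ factors, so that the construction proceeds in $\Theta(\log\log p)$ stages, each stage being a concatenation (tensor) step analogous to the boosting operations in Section~\ref{sec:overview:quadratic} but done all at once. Second, at the base level, start with a small explicit family (e.g.\ the two-vector $\{0,1\}$-covering family, or a slightly larger one of constant size) whose pairwise differences cover a constant-size set. Third, at each stage, apply a combined concatenation-and-scaling operation: take many (say $t$) scaled copies $(\alpha_1 \vec v, \ldots, \alpha_t \vec v)$ where the scalars $\alpha_i$ are chosen (by a probabilistic / pigeonhole argument over $\zp$) so that the union $\bigcup_i \alpha_i S$ of scaled cover-sets is as large as possible — ideally a constant-factor multiple of $|S|$ at the cost of a factor $t$ in length. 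Fourth, to avoid losing size, interleave this with the size-squaring concatenation; the point is that after $\Theta(\log\log p)$ stages the cover set has grown from constant size to all of $\zp$ (each stage multiplying $|S|$ by a constant bigger than $1$, needing $\Theta(\log p)$ such multiplications — so actually the stages must each do many multiplications, which is where the $p^{5\log\log p}$ length bound comes from: length grows like $(\text{const})^{\#\text{multiplications}}$ but with the stagewise batching it is only $p^{O(\log\log p)}$). Fifth, track the size: the squaring steps give size $\approx 2^\ell$, and the careful bookkeeping of the small multiplicative losses at each of $O(\log\log p)$ stages degrades the base $2$ to $2 - 1/\log p$; one shows $\prod_{\text{stages}}(1 - \epsilon_i) \geq 1 - O(\log\log p)\cdot \epsilon_{\max}$ and chooses parameters so this is $\geq 1 - 1/(2\log p)$ in the exponent.

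The main obstacle I expect is the simultaneous control of three competing quantities — the family size (want close to $2^\ell$), the vector length (want only $p^{O(\log\log p)}$, \emph{not} the naive $p^{O(\log p)}$ one gets from applying boosting operations one at a time), and the coverage (want all of $\zp$). The naive sequential application of the two boosting operations from Section~\ref{sec:overview:quadratic} needs $\Theta(\log p)$ cover-doubling steps, each multiplying length by $2$, giving length $p^{\Theta(1)}$ but only if each step is lossless — and they are \emph{not} lossless once $|S|$ is a constant fraction of $p$, which is exactly the regime that matters near the end. The resolution (and the cleverness of~\cite{AlonA20}) is to do the cover-growth in $\Theta(\log\log p)$ \emph{batched} stages, each stage performing many scalings at once via a single pigeonhole choice of the scalar tuple $(\alpha_1,\ldots,\alpha_t)$ over $\zp$; verifying that such a tuple exists with $\bigl|\bigcup_i \alpha_i S\bigr|$ large enough is a union-bound / second-moment computation over $\zp^t$, and getting the quantitative bound $(2 - 1/\log p)^\ell$ rather than, say, $(2 - 1/\log\log p)^\ell$ requires the loss per stage to be $O(1/\log p)$ in the exponent, which pins down $t \approx \log p$ and hence the total length $t^{\Theta(\log\log p)} = p^{\Theta(\log\log p)} = p^{5\log\log p}$ for a suitable constant. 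I would also need the elementary number-theoretic facts already cited in the excerpt (Bertrand's postulate for the auxiliary primes, and that a random subset of $\zp$ of size $\frac{p\ln p}{m}$ multiplicatively covers everything), and I would state the final size bound by induction on the stage count, deferring the messy but routine exponent arithmetic to a lemma.
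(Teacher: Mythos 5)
Your proposal gets the broad motivation right (grow the covered set $S$ incrementally from a small set to all of $\zp$ while carefully tracking the tradeoff between family size and vector length), but the mechanism you describe is not the one the paper uses, and several of your quantitative claims are incorrect; it would not produce the required bound.

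First, the framing. You propose indexing coordinates by a structured domain and taking the family to be evaluation vectors of low-degree polynomials, relying on the value distribution of differences $f-g$. The actual construction does not use polynomials at all. It works with \emph{balanced codewords} $\mathcal{B}_\ell$: vectors over $\zp$ in which every nonzero residue appears with the same multiplicity and $0$ never appears. This symmetry is essential in two places: it makes the entire family $\mathcal{B}_\ell$ closed under scalar multiplication (needed for \cref{lemma:step}), and it lets the partitioning lemma (\cref{lemma:part}) be proved via a bipartite biregular graph (permutations versus codewords) and a Hall-type matching argument (\cref{lemma:hall}). None of this has an analogue in a polynomial-evaluation family.

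Second, the stage count. You argue for $\Theta(\log\log p)$ ``batched'' stages, each multiplying $|S|$ by a large factor via a pigeonhole choice of a scalar tuple $(\alpha_1,\dots,\alpha_t)$. The actual proof uses $\log p$ stages (index $z$ runs from $0$ to $\log p$), each of which merely \emph{doubles} the covered set from $S_{z-1}=\{\alpha^0,\dots,\alpha^{2^{z-1}-1}\}$ to $S_z$ using a single deterministic scalar $a=\alpha^{2^{z-1}}$ (a power of a fixed primitive root). The quantity $\log\log p$ appears in the exponent of $p$ not because there are few stages, but because each stage multiplies the vector length by $m-1 \approx (\log p)^{3.5}$, and $\bigl((\log p)^{3.5}\bigr)^{\log p}=p^{3.5\log\log p}$. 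Your arithmetic ``$t^{\Theta(\log\log p)}=p^{\Theta(\log\log p)}$ with $t\approx\log p$'' is incorrect: $(\log p)^{\log\log p}=2^{(\log\log p)^2}$, which is vastly smaller than $p^{\log\log p}=2^{\log p\cdot\log\log p}$.

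Third, and most seriously, the engine that makes a per-stage length blowup of only $\mathrm{polylog}(p)$ suffice is absent from your proposal. \cref{lemma:step} takes a partition of $\mathcal{B}_\ell$ into $K$ disjoint $S$-covering parts, each of size at least $N$, and constructs length-$(m-1)\ell$ ``chains'' $(\vec w_1,\dots,\vec w_{m-1})$ in which each $\vec w_i$ is required to lie in the part indexed by $a^{-1}\vec w_{i-1}$. Because the first disagreement index witnesses $S$-covering and the last disagreement index (being in a common part after multiplication by $a^{-1}$) witnesses $aS$-covering, the result is $(S\cup aS)$-covering with at least $N^m/|\mathcal{B}_\ell|$ elements. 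Taking $m$ as large as $(\log p)^{3.5}$ is what amortizes the $|\mathcal{B}_\ell|\le p^\ell$ loss into a base loss of only $O\bigl(1/(\log p)^2\bigr)$ per stage, which over $\log p$ stages accumulates to the $1/\log p$ in the final bound. This chain-over-a-partition idea, together with the Hall-theorem lemma that supplies the partition, is the actual content of \cite{AlonA20}, and your proposal replaces it with a much weaker probabilistic scalar-selection step (which, incidentally, is closer to the paper's own contribution in \cref{lemma:primes} than to the mechanism of \cref{thm:aa20} itself).

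Finally, your base case is also off: the paper's $\mathcal{A}_0$ is not a constant-size family but a family of size about $2^{\ell_0}$ (pairs of adjacent residue-blocks each balanced between two consecutive values), and this exponential size at the base is necessary for the size bookkeeping to go through.
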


To prove \cref{thm:aa20}, we actually show the following result which implies it.

\begin{theorem}
\label{thm:aa20-actual}
There exists a sufficiently large $p_1$ such that for all primes $p > p_1$, there exists an $\ell(p) \leq p^{4 \log \log p}$ for which there is a $\paren*{ \zp \setminus \set*{0} }$-covering $\paren*{ \ell, p }$-Family $\mathcal{A}(p)$ of size at least $\paren*{ 2 - \frac{ 0.75 }{ \log p } }^{\ell(p)}$.
\end{theorem}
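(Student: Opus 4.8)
The plan is to follow the construction of \cite{AlonA20}, which builds a large $(\zp \setminus \{0\})$-covering family by a recursive/product construction over a chain of growing moduli, carefully tracking how the length and the size evolve at each stage. The starting point is a trivial observation: for $\ell = p-1$ (or a small constant multiple), one can take the family of all "arithmetic-progression" vectors $\{(0, a, 2a, \ldots, (p-2)a) : a \in \zp\}$-type objects, or more simply a constant-size base family that is $(\zp\setminus\{0\})$-covering with length $O(p)$; the real content is amplification. First I would set up the recursion: fixing a sequence of primes $p = q_1 > q_2 > \cdots > q_t$ (roughly $q_{i+1} \approx \sqrt{q_i}$, so $t = O(\log\log p)$), I would construct, for each $i$, a $(\zp \setminus \{0\})$-covering family over $\mathbb{Z}_p$ whose vectors are obtained by "lifting" covering families for the smaller modulus $q_{i+1}$ and concatenating scaled copies so as to cover all residues. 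The two boosting moves described in \cref{sec:overview:quadratic} (size-for-length and cover-for-length) are the atomic operations, but the improvement comes from batching many cover-boosts together and exploiting that over $\mathbb{Z}_{q}$ with $q$ small one can cover $\mathbb{Z}_q \setminus \{0\}$ with a family of length only $q^{O(\log\log q)}$ rather than $q^2$, then scaling up.

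The key steps, in order: (1) Establish a base case — a $(\mathbb{Z}_{q_t} \setminus \{0\})$-covering family over $\mathbb{Z}_{q_t}$ of length polynomial in $q_t$ and size $\ge 2^{\Omega(\text{length})}$, using the naive $p \cdot \lceil \log_p N\rceil$ construction from \cref{item:thm:prior1} (which already gives length $\approx q_t^2$ and size $2^{q_t}$, more than enough when $q_t$ is bounded). (2) Inductive step: given a good covering family modulo $q_{i+1}$, use the "fixing the arithmetic" idea (replace each coordinate by a length-$O(\log q_{i+1})$ gadget of shifts by multiples of $q_{i+1}$, exactly as in \cref{lem:ff2}) to transport it to a $[0,q_{i+1}-1]$-covering family modulo $q_i$, blowing up length by $O(\log q_{i+1})$. (3) Then use the "fixing the set covered" idea (\cref{lemma:primes}, scaling by $\approx \frac{q_i \ln q_i}{q_{i+1}}$ random multipliers) to turn that into a $(\mathbb{Z}_{q_i}\setminus\{0\})$-covering family modulo $q_i$, at a further length cost of $O(q_i \log q_i / q_{i+1}) = O(\sqrt{q_i}\log q_i)$. (4) Optionally apply size-for-length boosting to push the size back up to $2^{\Omega(\text{length})}$. (5) Unwind the recursion: at each of the $t = O(\log\log p)$ levels the length multiplies by roughly $\sqrt{q_i}\cdot\mathrm{polylog}$, so the final length is $\ell(p) \le \prod_i \sqrt{q_i}\cdot\mathrm{polylog}(q_i)$; since $\sum_i \log q_i \le \sum_i \frac{\log p}{2^{i-1}} \le 2\log p$ up to the polylog overhead, and the number of levels is $O(\log\log p)$, this telescopes to $p^{O(\log\log p)}$, and with the constants chosen correctly one gets $\ell(p) \le p^{4\log\log p}$. (6) Track the base of the exponent in the size: each level the size goes from $\card{\mathcal{V}}$ to $\card{\mathcal{V}}^{\eta}$ for some $\eta$ slightly below what a pure doubling would give, and one shows the accumulated loss is only a $\left(1 - \frac{0.75}{\log p}\right)$ factor in the exponent base, i.e.\ size $\ge \left(2 - \frac{0.75}{\log p}\right)^{\ell(p)}$.

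I expect the main obstacle to be step (6): carefully bookkeeping the \emph{base of the exponent} through $O(\log\log p)$ recursive levels so that the cumulative multiplicative loss per level is only $O(1/(\log p \cdot \log\log p))$ in the right sense, summing to the claimed $\left(2-\frac{0.75}{\log p}\right)$. The length bound (steps 2–5) is a routine if tedious product estimate, and the covering properties (steps 2–3) are exactly the arguments already appearing in \cref{lem:ff2} and \cref{lemma:primes} applied at each scale; but the size analysis requires being precise about which boosting operations lose how much, and in particular one must argue that the "cover-for-length" batches (which keep size fixed while growing length) are not too numerous relative to "size-for-length" boosts — otherwise the exponent base degrades below $2 - o(1)$. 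Finally, \cref{thm:aa20} follows from \cref{thm:aa20-actual} by a standard padding argument: given any target $\ell \ge p^{5\log\log p} \ge \ell(p) \cdot p^{\log\log p}$, take many independent copies (concatenations) of $\mathcal{A}(p)$ — equivalently apply size-for-length boosting $\lceil \log_{\ell(p)} \ell \rceil$ times — to reach length exactly $\ell$ while the size becomes at least $\left(2 - \frac{0.75}{\log p}\right)^{\ell} \ge \left(2 - \frac{1}{\log p}\right)^{\ell}$ for $p$ large, and one checks $0 \in \zp$ is covered for free so $(\zp\setminus\{0\})$-covering upgrades to $\zp$-covering.
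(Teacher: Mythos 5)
Your proposal has a fundamental gap: you are attempting to prove \cref{thm:aa20-actual} (the Alon--Alweiss result that the paper imports as a black box) using the same concatenation-and-scaling machinery that appears in \cref{sec:main_code} (in particular the gadgets of \cref{lem:ff2} and \cref{lemma:primes}), iterated over a chain of primes $p = q_1 > q_2 > \cdots > q_t$. But these are exactly the \emph{agnostic concatenation procedures} that the paper's own \cref{prop:concatenation} proves cannot work: when you go from an $S$-covering family to an $S'$-covering family with $\card{S'}/\card{S}$ roughly $\sqrt{q_i}$ by concatenating scaled copies, the ``density'' $\log\card{\mathcal{V}}/\ell$ must drop by essentially the same factor $\sqrt{q_i}$, and size-for-length boosting (concatenating copies of a family with itself) preserves this density, so ``optionally pushing the size back up'' in your step (4) cannot recover it. Telescoping your chain, the density drops by $\prod_i \Theta(\sqrt{q_i}\,\mathrm{polylog}\,q_i) = p^{1+o(1)}$, but density is at most $\log_2 p$ to begin with, so the final size is at most $2^{\ell\cdot\mathrm{polylog}(p)/p}$, nowhere near the required $(2-0.75/\log p)^{\ell}$. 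Your step (6) is not a bookkeeping difficulty to be overcome with care; it is an impossibility.

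The paper's actual proof (\cref{sec:summary_AlonA20}) is of a genuinely different character, and it is these ideas that you are missing. It works entirely over $\mathbb{Z}_p$, with no chain of smaller primes. The covered set grows geometrically through powers of a fixed primitive root $\alpha$, $S_z = \{\alpha^0,\ldots,\alpha^{2^z-1}\}$, over $z=0,\ldots,\log p$ rounds. The engine is \cref{lemma:step}, which is \emph{not} an agnostic concatenation procedure: it requires the ambient family $\mathcal{V}$ to be closed under scalar multiplication (\cref{def:closed}) and, crucially, to admit a \emph{partition} of all of $\mathcal{V}$ into many disjoint $S$-covering parts of comparable size. Under these hypotheses one can chain together $m$ carefully chosen copies (each chosen depending on the previous one's part) so that for any two outputs one coordinate sits in a common $S$-covering part, and a later coordinate sits in a common $S$-covering part after multiplication by $a^{-1}$, yielding an $(aS\cup S)$-covering family whose size is $N^m/\card{\mathcal{V}}$ rather than merely $N$. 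The required partition is produced by \cref{lemma:part}, which applies a Hall's-theorem argument (\cite{AlonA20}, Lemma 2.1) to a bipartite permutation graph on the set $\mathcal{B}_\ell$ of balanced codewords. The base case $\mathcal{A}_0$ is a family of balanced codewords of size roughly $2^{\ell_0}$, already close to maximal density. None of these ingredients — balanced codewords, closure under scalar multiplication, Hall's-theorem partitions, or the adaptive chaining in \cref{lemma:step} that escapes the agnostic barrier — appear in your proposal, and \cref{prop:concatenation} shows they cannot be dispensed with.

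Your concluding remarks on deriving \cref{thm:aa20} from \cref{thm:aa20-actual} by padding (append zeros to upgrade $(\zp\setminus\{0\})$-covering to $\zp$-covering, and concatenate copies to reach an arbitrary target length $\ell$) are essentially correct and match the paper's reduction.
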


We argue why \cref{thm:aa20} follows from \cref{thm:aa20-actual}.

\begin{proof}[Proof of \cref{thm:aa20} assuming \cref{thm:aa20-actual}] Fix $p, \ell$ as in \cref{thm:aa20}. Let $\ell'$ be the largest integer multiple of $\ell(p)$ that is strictly smaller than $\ell$. We claim that their exists a $\paren*{ \zp \setminus \set*{0} }$-covering $\paren*{ \ell', p }$-Family $\mathcal{A}'$ of size at least $\paren*{ 2 - \frac{ 0.75 }{ \log p } }^{\ell'}$. Indeed, consider the set of all vectors formed by concatenating $\ell/\ell'$ elements of $\mathcal{A}(p)$ together. This set clearly has size $\paren*{ 2 - \frac{ 0.75 }{ \log p } }^{\ell}$, and is clearly still $\paren*{ \zp \setminus \set*{0} }$-covering.

Next, define an $\paren*{ \ell, p }$-Family $\mathcal{A}$ to the be same as the family $\mathcal{A}'$ except that each element in $\mathcal{A}'$ is appended by $\ell - \ell'$ zeros. Note that $\mathcal{A}$ is $\zp$-covering because $\mathcal{A}'$ is $\paren*{ \zp \setminus \set*{0} }$-covering. Thus, the only remaining step in the proof is to show that $\card*{ \mathcal{A} } \geq \paren*{ 2 - \frac{1}{ \log p } }^{\ell}$. This is because our choice of $\ell'$ implies:
\[
\card*{ \mathcal{A} } = \card*{ \mathcal{A}' } \geq \paren*{ 2 - \frac{ 0.75 }{ \log p } }^{\ell'} \geq \paren*{ 2 - \frac{ 0.75 }{ \log p } }^{ \ell - p^{4 \log \log p} } \geq \paren*{ 2 - \frac{ 1 }{ \log p } }^{ \ell } .
\]

The final inequality holds for sufficiently large $p$, which defines $p_1$.
\end{proof}

The rest of this section is dedicated to proving \cref{thm:aa20-actual}. Fix $p$ as in the statement of \cref{thm:aa20-actual}. We first capture the main steps of \cite{AlonA20} in \cref{sec:closed,sec:balanced}, and finally establish \cref{thm:aa20-actual} in \cref{sec:AlonA20:proof}.

\subsection{Families Closed Under Multiplication}
\label{sec:closed}
For a set $S \subseteq \zp$ and a value $a \in \zp$, we use $aS$ to to denote the set $aS = \set*{ as \mid s \in S }$. The main result of this section is~\cref{lemma:step}, which provides a technique to turn a (structured) $S$-covering family into an $(aS \cup S)$-covering family for any $a \in \mathbb{Z}_p$. This procedure is a key step that will be applied repeatedly to grow from a family that covers a small set to a $\mathbb{Z}_p$-covering family.

\begin{definition}
\label{def:closed}
Let $\ell > 0$ be an integer and $\mathcal{V}$ be an $(\ell, p)$-Family. We say that $\mathcal{V}$ is closed under scalar multiplication if for all $a \neq 0 \in \zp$ and all $\vec{v} \in \mathcal{V}$, we have $a \vec{v} \in \mathcal{V}$.
\end{definition}

\begin{lemma}
\label{lemma:step}
Let $\ell > 0$ be an integer and $\mathcal{V}$ be an $(\ell, p)$-Family closed under scalar multiplication. Let $S \subseteq \zp$. If there exist integers $N, K > 0$ such that $\mathcal{V}$ can be partitioned into $K$ (disjoint) $S$-covering families $\mathcal{V} = \mathcal{V}_1 \cup \dots \cup \mathcal{V}_K$, each of size at least $N$, then, for all $m \geq 0$ and all $a \neq  0 \in \zp$, there exists an $\paren*{ (m-1)\ell, p }$-Family $\mathcal{V}' \subseteq \mathcal{V}^{m-1}$ of size at least $\frac{ N^m }{ \card*{ \mathcal{V} } }$ that is $\paren*{ aS \cup S }$-covering.
\end{lemma}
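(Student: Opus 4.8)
The plan is to prove \cref{lemma:step} by induction on $m$, building $\mathcal{V}'$ as a structured sub-family of the product family $\mathcal{V}^{m-1}$ and exploiting the closure of $\mathcal{V}$ under scalar multiplication to convert $S$-covering structure into $aS$-covering structure.

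\textbf{Base cases.} For $m=1$ the target family sits in $\mathcal{V}^{0}$ and needs only size $\ge N/|\mathcal{V}|\le 1$, so the single empty vector works. For $m=2$, since $\mathcal{V}$ is closed under scalar multiplication we have $a\mathcal{V}=\mathcal{V}$, so $a\mathcal{V}_1,\dots,a\mathcal{V}_K$ is also a partition of $\mathcal{V}$; moreover a pair of distinct vectors inside a single $a\mathcal{V}_{j'}$ has difference equal to $a$ times the difference of a pair in $\mathcal{V}_{j'}$, hence is $aS$-covering. Therefore each of the $K^{2}$ sets $\mathcal{V}_j\cap a\mathcal{V}_{j'}$ is simultaneously $S$-covering (it lies in $\mathcal{V}_j$) and $aS$-covering (it lies in $a\mathcal{V}_{j'}$), i.e.\ $(aS\cup S)$-covering; these sets partition $\mathcal{V}$, so the largest has size $\ge |\mathcal{V}|/K^{2}\ge N^{2}/|\mathcal{V}|$, using $|\mathcal{V}|=\sum_j|\mathcal{V}_j|\ge KN$.

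\textbf{Inductive step.} Assume the statement for $m-1$, giving an $(aS\cup S)$-covering family $\mathcal{U}\subseteq\mathcal{V}^{m-2}$ of size $\ge N^{m-1}/|\mathcal{V}|$; we want to append coordinates so as to multiply the size by a factor $N$. Two observations structure the design. First, a pair of extended vectors that already differ somewhere in the first $m-2$ coordinates is $(aS\cup S)$-covered by its $\mathcal{U}$-prefix, so the appended part is unconstrained for such pairs. Second, a pair agreeing on all of the first $m-2$ coordinates must have its covering witnessed entirely by the appended part, so for each fixed prefix the allowed extensions must themselves form an $(aS\cup S)$-covering family. The construction therefore draws the new coordinate(s) from large $\mathcal{Q}$-parts $\mathcal{V}_j$ (guaranteeing $S$-coverage) while linking them to scaled copies lying in the corresponding $a\mathcal{V}_{j'}$ (guaranteeing $aS$-coverage), with closure under multiplication being exactly what keeps these scaled copies inside $\mathcal{V}$. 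One then checks that the resulting family retains the closure/partition structure needed to continue the recursion, and produces a family in $\mathcal{V}^{m-1}$ of size $\ge N^{m}/|\mathcal{V}|$.

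\textbf{Where the difficulty lies.} The delicate part is the size accounting rather than the covering verification. A $\mathcal{Q}$-part has size $\ge N$, but an $(aS\cup S)$-covering subset of $\mathcal{V}$ is only guaranteed to have size $\ge|\mathcal{V}|/K^{2}$, which may be much smaller than $N$ when $|\mathcal{V}|$ is close to $KN$; consequently a naive coordinate-by-coordinate extension keeping each extension $(aS\cup S)$-covering loses a factor of $K$ per step, while a naive ``pair every coordinate with its $a$-scaled copy'' construction loses a factor $2$ in the exponent of $N$. Obtaining the clean bound $N^{m}/|\mathcal{V}|$ requires using the closure hypothesis to share the $S$- and $aS$-covering roles across coordinates without incurring either loss, and this is the technical heart of the argument, for which the specific combinatorial construction of \cite{AlonA20} is used.
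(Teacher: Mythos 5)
Your proof does not close, and you essentially say so yourself in the final paragraph. The specific problem is the shape of the induction: your inductive hypothesis for $m-1$ hands you only an $(aS\cup S)$-covering family $\mathcal{U}\subseteq\mathcal{V}^{m-2}$ of the right size, but that hypothesis carries none of the structure you then invoke to append a coordinate (no partition into size-$\ge N$ parts, no closure under scaling, no linkage between the last coordinate of $\mathcal{U}$ and the new one). You notice exactly this — the "naive" extensions either lose a factor $K$ or a factor $2$ in the exponent — and then defer the resolution to "the specific combinatorial construction of \cite{AlonA20}," which is the lemma you are supposed to be proving. So the inductive step is a restatement of the difficulty, not an argument.

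The paper sidesteps this entirely by \emph{not} inducting on $m$ inside \cref{lemma:step}. It fixes an arbitrary anchor $\vec{w}_0\in\mathcal{V}$ and directly builds the chain family $\mathcal{W}_m=\{(\vec{w}_1,\dots,\vec{w}_m)\in\mathcal{V}^m : \vec{w}_i\in\mathcal{V}_{k(a^{-1}\vec{w}_{i-1})}\ \forall i\}$. Crucially, the intermediate $\mathcal{W}_z$ are never claimed to be $(aS\cup S)$-covering; the only property tracked along the chain is the size bound $|\mathcal{W}_z|\ge N^z$ (each step has $\ge N$ legal extensions), which is a trivial count. It then projects: partition $\mathcal{W}_m$ by its last coordinate, pick the most popular value $\vec{v}^*$, drop it, and call the prefixes $\mathcal{V}'$; pigeonhole over $|\mathcal{V}|$ choices of last coordinate gives $|\mathcal{V}'|\ge N^m/|\mathcal{V}|$. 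The covering verification is then a one-shot argument on $\mathcal{V}'$: for distinct $\vec{v},\vec{v}'\in\mathcal{V}'$, look at the \emph{smallest} differing index $i_s$ (both $\vec{v}_{i_s},\vec{v}'_{i_s}$ lie in the single part $\mathcal{V}_{k(a^{-1}\vec{v}_{i_s-1})}$, giving $S$-coverage) and the \emph{largest} differing index $i_b$ (both chains continue to the same $\vec{v}_{i_b+1}$, forcing $k(a^{-1}\vec{v}_{i_b})=k(a^{-1}\vec{v}'_{i_b})$, giving $aS$-coverage after scaling by $a$). This "first and last disagreement" trick is the missing idea; it lets every pair get both coverage roles for free from the chain constraints, with no per-step loss. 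Your $m=2$ base case is in fact exactly the paper's construction specialized to $m=2$ (your $\mathcal{V}_j\cap a\mathcal{V}_{j'}$ is their $\mathcal{V}_{k(a^{-1}\vec{w}_0)}\cap a\mathcal{V}_{k(\vec{v}^*)}$), so the right move is to generalize that directly to an $m$-step chain rather than to recurse.
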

\begin{proof}
In this proof, for $z > 0$, it shall sometimes be convenient to view vectors in $\zp^{z\ell}$ and elements of $\paren*{ \zp^{\ell} }^z$. For a vector $\vec{v} \in \mathcal{V}$, we define the value $k(\vec{v})$ to be the unique value $k \in [K]$ such that $\vec{v} \in \mathcal{V}_k$. Observe that $k(\vec{v})$ is well defined as $\mathcal{V}_1 \cup \dots \cup \mathcal{V}_K$ form a partition of $\mathcal{V}$. Let $\vec{w}_0 \in \mathcal{V}$ be arbitrary. For all $z > 0$, define the $\paren*{ z\ell, p }$ family $\mathcal{W}_z$ as the set of all $z$-tuples $(\vec{w}_1,\ldots, \vec{w}_z)$ of elements of $\mathcal{V}$ such that for all $i$, $\vec{w}_i$ is in the same part as $a^{-1}\vec{w}_{i-1}$ (observe that because $\mathcal{V}$ is closed under scalar multiplication, $a^{-1}\vec{w}_{i-1} \in \mathcal{V}$). That is:
\[
\mathcal{W}_z = \set*{ \vec{w} \in \mathcal{V}^z \mid \forall i \in [z]: \vec{w}_i \in \mathcal{V}_{ k\paren*{ a^{-1} \vec{w}_{i-1} } } }.
\]

We claim that for all $z > 0$, we have $\card*{ \mathcal{W}_z } \geq N^z$. Indeed, observe that when $z=1$, $\mathcal{W}_1$ consists of all $\vec{w}_1$ such that $\vec{w}_1$ is in the same part as $a^{-1}\vec{w}_0$. Whatever part this is, it contains at least $N$ elements (by hypothesis in the lemma statement), so therefore the claim holds for $z=1$. We now prove the claim for all $z$ by induction. Indeed, observe that every element of $\mathcal{W}_z$ is an element $(\vec{w}_1,\ldots, \vec{w}_{z-1}$) of $\mathcal{W}_{z-1}$ concatenated by some $\vec{w}_z$ in the same part as $a^{-1}\vec{w}_{z-1}$. Whatever part this is, is has size at least $N$ by hypothesis. Therefore, for every element in $\mathcal{W}_{z-1}$, there are at least $N$ ways to extend it to an element in $\mathcal{W}_z$, and each of these extensions are unique. This implies that $\card*{ \mathcal{W}_z } \geq N^z$ for all $z > 0$, and in particular that $\card*{ \mathcal{W}_m} \geq N^m$.

We now partition the elements $\vec{w}$ of the set $\mathcal{W}_m$ based on the value of $\vec{w}_m$ (the last coordinate), and define $\vec{v}^* \in \mathcal{V}$ to be such that the part corresponding to $\vec{w}_m = \vec{v}^*$ is the largest, breaking ties arbitrarily. We define our set $\mathcal{V}'$ using this part, specifically:
\[
\mathcal{V}' = \set*{ \vec{w}' \in \paren*{ \zp^{\ell} }^{m-1} \mid \paren*{ \vec{w}', \vec{v}^* } \in \mathcal{W}_m } .
\]
Observe that $\mathcal{V'} \subseteq \mathcal{V}^{m-1}$. Also, by our choice of $\vec{v}^*$, we have $\card*{ \mathcal{V}' } \geq \frac{ N^m }{ \card*{ \mathcal{V} } }$, as claimed. It remains to show that $\mathcal{V}$ is $\paren*{ aS \cup S }$-covering. For this we recall \cref{def:covering} and fix two arbitrary vectors $\vec{v} \neq \vec{v}' \in \mathcal{V}'$. As $\vec{v} \neq \vec{v}'$, there exists an $i \in [m - 1]$ such that $\vec{v}_i \neq \vec{v}'_i$. Define $i_s$ and $i_b$ to be the smallest and the largest such $i$, respectively. As both $\paren*{ \vec{v}, \vec{v}^* }, \paren*{ \vec{v}', \vec{v}^* } \in \mathcal{W}_m$ by definition of $\mathcal{V}'$, we get (defining $\vec{v}_m = \vec{v}'_m = \vec{v}^*$ for convenience):
\[
\vec{v}_{i_s}, \vec{v}'_{i_s} \in \mathcal{V}_{ k\paren*{ a^{-1} \vec{w}_{i_s - 1} } } \hspace{1cm} \text{and} \hspace{1cm} \vec{v}_{i_b + 1} \in \mathcal{V}_{ k\paren*{ a^{-1} \vec{v}_{i_b} } } \cap \mathcal{V}_{ k\paren*{ a^{-1} \vec{v}'_{i_b} } } .
\]
As $\vec{v}_{i_s} \neq \vec{v}'_{i_s}$ by our choice of $i_s$ and we have that $\mathcal{V}_{ k\paren*{ a^{-1} \vec{w}_{i_s - 1} } }$ is $S$-covering, we get from the former that for all $s' \in S$, there exists $j \in [\ell]$ such that $\vec{v}_{i_s, j} - \vec{v}'_{i_s, j} = s' \pmod p$. Similarly, as the sets $\mathcal{V}_1 \cup \dots \cup \mathcal{V}_K$ form a partition of $\mathcal{V}$ , the latter is only possible if $k\paren*{ a^{-1} \vec{v}_{i_b} } = k\paren*{ a^{-1} \vec{v}'_{i_b} }$. However, this means that there exists $k \in [K]$ such that $a^{-1} \vec{v}_{i_b} \neq a^{-1} \vec{v}'_{i_b}$ are two elements of $\mathcal{V}_k$. As $\mathcal{V}_k$ is $S$-covering, this means that for all for all $s' \in aS$, there exists $j \in [\ell]$ such that $\vec{v}_{i_b, j} - \vec{v}'_{i_b, j} = s' \pmod p$. Combining the two results and using \cref{def:covering}, we get that $\mathcal{V}$ is $\paren*{ aS \cup S }$-covering, as desired.
\end{proof}

\subsection{Balanced Codewords}
\label{sec:balanced}
This section introduced Balanced Codewords, the main object used to build a base case family that covers a small set, and that is structured enough to repeatedly apply until it covers all of $\mathbb{Z}_p$~\cref{lemma:step}.

For a non-negative integer $\ell$ that is a multiple of $p - 1$, we define $\mathcal{B}_{\ell}$ to be the $(\ell, p)$-Family such that all $\vec{b} \in \mathcal{B}$ contain all elements $a \neq 0 \in \zp$ the same number of times and do not contain $0$. In particular, we have that the family $\mathcal{B}_{\ell}$ is closed under scalar multiplication

\begin{lemma}
\label{lemma:part}
Consider an integer $\ell > 0$ that is a multiple of $p - 1$. Let $S \subseteq \zp$ and $\mathcal{A} \subseteq \mathcal{B}_{\ell}$ be an $S$-covering $(\ell, p)$-Family of size at least $10 \ell \cdot \log p$. There exists $K > 0$ and a partition $\mathcal{B}_{\ell} = \mathcal{C}_1 \cup \dots \cup \mathcal{C}_K$ such that for all $k \in [K]$, we have that $\mathcal{C}_k$ is an $S$-covering family of size at least $\frac{ \card*{ \mathcal{A} } }{ 10 \ell \cdot \log p }$.
\end{lemma}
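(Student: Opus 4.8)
The plan is to exploit the rich symmetry of $\mathcal{B}_\ell$. The symmetric group $S_\ell$ acts on $\mathbb{Z}_p^\ell$ by permuting coordinates; this action maps $\mathcal{B}_\ell$ onto itself and is \emph{transitive} on $\mathcal{B}_\ell$ (every vector in $\mathcal{B}_\ell$ has the same multiset of entries), and, crucially, it preserves the $S$-covering property, since permuting coordinates does not change the multiset $\{v_i-w_i : i\in[\ell]\}$ of coordinate-wise differences of a pair. Hence every translate $\pi(\mathcal{A})$ is again an $S$-covering subfamily of $\mathcal{B}_\ell$ of the same size as $\mathcal{A}$, and for every $\vec{b}\in\mathcal{B}_\ell$ and uniformly random $\pi\in S_\ell$ we have $\Pr[\vec{b}\in\pi(\mathcal{A})]=|\mathcal{A}|/|\mathcal{B}_\ell|$. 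So $\mathcal{A}$ plays the role of a large independent set in the vertex-transitive ``bad-pair'' graph on $\mathcal{B}_\ell$ (vertices joined when the pair fails to be $S$-covering), and the task is to partition this graph into independent sets that are all reasonably large.

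First I would cover $\mathcal{B}_\ell$ efficiently by translates of $\mathcal{A}$. Running the standard greedy set-cover argument and using the averaging identity above, there exist permutations $\pi_1,\pi_2,\dots$ such that, writing $R_{i-1}$ for the set not covered by $\pi_1(\mathcal{A}),\dots,\pi_{i-1}(\mathcal{A})$, the fresh contribution $\mathcal{G}_i:=\pi_i(\mathcal{A})\cap R_{i-1}$ has size at least $|R_{i-1}|\cdot|\mathcal{A}|/|\mathcal{B}_\ell|$; this terminates after $t\le\lceil (|\mathcal{B}_\ell|/|\mathcal{A}|)(\ln|\mathcal{B}_\ell|+1)\rceil$ steps with $\bigcup_i\pi_i(\mathcal{A})=\mathcal{B}_\ell$. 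The sets $\mathcal{G}_i$ partition $\mathcal{B}_\ell$ and each is $S$-covering, being a subset of some $\pi_i(\mathcal{A})$. Put $m:=|\mathcal{A}|/(10\ell\log p)$. As long as $|R_{i-1}|\ge|\mathcal{B}_\ell|/(10\ell\log p)$ we get $|\mathcal{G}_i|\ge m$ (call these the \emph{big} parts); once $|R_{i-1}|$ drops below $|\mathcal{B}_\ell|/(10\ell\log p)$, the still-uncovered set $R^\ast$ has $|R^\ast|<|\mathcal{B}_\ell|/(10\ell\log p)$, and all subsequently created parts (the \emph{small} parts) have total size at most $|R^\ast|$. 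Using the crude bound $\ln|\mathcal{B}_\ell|\le\ell\ln p$ together with the bound on $t$, the big parts carry all but a $\le 1/(10\ell\log p)$ fraction of $\mathcal{B}_\ell$, there are at most $t$ of them, and their total ``slack'' $\sum_{\text{big }i}(|\mathcal{G}_i|-m)$ is a constant fraction of $|\mathcal{B}_\ell|$, while the number of elements of $R^\ast$ still to be placed is at most $|\mathcal{B}_\ell|/(10\ell\log p)$.

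It remains to absorb $R^\ast$, and this is the main obstacle: an element $x\in R^\ast$ lies in none of the translates $\pi_1(\mathcal{A}),\dots$ that produced the big parts, so it cannot simply be dropped into a big part without destroying $S$-covering. My plan is to cover $R^\ast$ by a further $r=O((|\mathcal{B}_\ell|/|\mathcal{A}|)\log|\mathcal{B}_\ell|)$ translates $h_1(\mathcal{A}),\dots,h_r(\mathcal{A})$ and to form new $S$-covering parts $\mathcal{D}_j\subseteq h_j(\mathcal{A})$, letting each $\mathcal{D}_j$ take the still-unplaced $R^\ast$-elements of $h_j(\mathcal{A})$ together with enough further elements of $h_j(\mathcal{A})$ drawn from big parts that currently have slack; such a transfer keeps $\mathcal{D}_j\subseteq h_j(\mathcal{A})$ $S$-covering and only shrinks a big part, so everything stays $S$-covering. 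The delicate point — and where the constant $10$ is spent — is to choose the $h_j$'s and bound the total transfer so that every $\mathcal{D}_j$ reaches size $m$ while no big part is drained below $m$: the transferred mass must be charged against the (constant-fraction) slack, and one must ensure each chosen $h_j(\mathcal{A})$ meets enough high-slack big parts, which follows because $|R^\ast|$ is only a $\le 1/(10\ell\log p)$ fraction of $|\mathcal{B}_\ell|$ (so a typical translate has $\ge m$ transferable elements) while covering $R^\ast$ costs only $r$ translates. The big parts (possibly shrunk, but still of size $\ge m$) together with $\mathcal{D}_1,\dots,\mathcal{D}_r$ then form the desired partition. I expect the bookkeeping in this last step — in particular arguing that the greedy choices for the $h_j$'s simultaneously make progress on $R^\ast$ and keep enough slack available — to be the only genuinely technical part of the argument.
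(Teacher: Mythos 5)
Your starting idea is correct and overlaps with the paper's: coordinate permutations preserve the $S$-covering property, act transitively on $\mathcal{B}_\ell$, and hence every subset of a translate $\pi(\mathcal{A})$ is $S$-covering; moreover the ``fresh contribution'' sets $\mathcal{G}_i = \pi_i(\mathcal{A})\cap R_{i-1}$ arising from any sequence of translates do form a partition of $\mathcal{B}_\ell$ into $S$-covering families. However, from that point on your route diverges from the paper's and, as written, has a genuine gap. The paper instead builds the regular bipartite graph $G=(L\cup R,E)$ with $L$ the permutations, $R=\mathcal{B}_\ell$, and an edge $(\pi,\vec{b})$ iff $\pi(\vec{b})\in\mathcal{A}$; since every vertex of $L$ has degree $|\mathcal{A}|$ and every vertex of $R$ has degree $\ell!\,|\mathcal{A}|/|\mathcal{B}_\ell|\ge|\mathcal{A}|\ge\log(2|R|)$, it applies AlonA20's Lemma 2.1 (a Hall's-theorem-plus-subsampling statement) to obtain a vertex-disjoint star decomposition in which \emph{every} star has at least $d_L/(4\log(2|R|))\ge|\mathcal{A}|/(10\ell\log p)$ leaves, and the leaves of each star lie inside a single translate. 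This gives the uniform lower bound on part sizes directly, with no post-hoc rebalancing.

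The gap in your argument is exactly the absorption step that you flag as ``the only genuinely technical part.'' The $S$-covering constraint is rigid: a part $\mathcal{D}_j$ must remain a subset of a \emph{single} translate $h_j(\mathcal{A})$, so you may only pad $\mathcal{D}_j$ with elements of $h_j(\mathcal{A})$ that currently sit in big parts with spare capacity. Global mass accounting (total steal $\ll$ total slack) is not enough; you need a \emph{per-translate} guarantee that each $h_j$ you commit to has $\ge m$ available elements at the moment it is processed, while simultaneously making greedy progress on covering $R^\ast$. These two demands can conflict: the $h_j$'s chosen for $R^\ast$-coverage are not controlled to intersect the high-slack parts, and as stealing proceeds the slack can concentrate on parts a given $h_j(\mathcal{A})$ barely meets. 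Your appeal to ``a typical translate has $\ge m$ transferable elements'' is a statement about the average translate, not about the specific translates forced on you by the covering step; making it rigorous would require an additional argument (e.g.\ a joint potential-function or a two-objective greedy choice) that you do not supply, and it is not clear the stated constant $10$ suffices. In short, you have reinvented the \emph{need} for a balanced star decomposition but not the decomposition itself; the cited lemma of AlonA20 is precisely the tool whose absence your argument is working around, and its proof (fractional Hall plus degree subsampling) is a cleaner way to get uniformly large parts than greedy-cover-then-rebalance.
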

\begin{proof}
The proof will use the following lemma (from \cite{AlonA20}) that is based on the well-known Hall's theorem.
\begin{lemma}[\cite{AlonA20}, Lemma 2.1]
\label{lemma:hall}
Let $G = \paren*{ L \cup R, E }$ be a bipartite graph such that all vertices in $L$ have the same degree, say $d_L$, and all vertices in $R$ have the same degree, say $d_R$. Assume that $d_R \geq \log \paren*{ 2 \card*{ R } }$. There exists a subset of $E$ that is a union of vertex-disjoint stars with centers in $L$, each star having at least $\frac{ d_L }{ 4 \cdot \log \paren*{ 2 \card*{ R } } }$ leaves, such that all vertices of $R$ are leaves.
\end{lemma}
To see why \cref{lemma:hall} holds, note that if we only had to show a bound of $\frac{ d_L }{ d_R }$ on the number of leaves, it would follow from Hall's theorem. In fact, this would holds even under the weaker assumption that all vertices in $R$ have degree at most $d_R$. \cref{lemma:hall} now follows as one can use $d_R \geq \log \paren*{ 2 \card*{ R } }$ to subsample vertices in $L$ so that the degree of each vertex in $R$ is reduced to be between $1$ and $4 \cdot \log \paren*{ 2 \card*{ R } }$.

We now prove \cref{lemma:part}. Define a bipartite graph $G = \paren*{ L \cup R, E }$ where $L$ is the set of all permutations $\pi$ on $\ell$ elements and $R$ is the set $\mathcal{B}_{\ell}$. A vertex $\pi \in L$ is adjacent to a vertex $\vec{b} \in R$ if and only if $\pi(\vec{b}) \in \mathcal{A}$, where $\pi(\vec{b})$ denotes the string obtained by permuting the coordinates of $\vec{b}$ according to $\pi$. Using the notation of \cref{lemma:hall}, we have for this graph that:
\begin{alignat*}{3}
& \card*{ L } &&= \ell!  \hspace{2cm} \card*{ R } &&= \frac{ \ell! }{ \paren*{ \paren*{ \ell/ \paren*{ p - 1 } } ! }^{ p - 1 } } \leq p^{\ell} \\
& d_L &&= \card*{ A }  \hspace{2cm} d_R &&= \frac{ \card*{ L } \cdot \card*{ \mathcal{A} } }{ \card*{ R } } \geq \card*{ \mathcal{A} } \geq \log \paren*{ 2 \card*{ R } } .
\end{alignat*}

Thus, we can apply \cref{lemma:hall} and get a union of vertex-disjoint stars as claimed in the lemma. We get that the leaves of these stars form a partition of $R = \mathcal{B}_{\ell}$, and we define $K$ to be the number of stars and $\mathcal{B}_{\ell} = \mathcal{C}_1 \cup \dots \cup \mathcal{C}_K$ to be the partition. By \cref{lemma:hall}, we have for all $k \in [K]$ that $\card*{ \mathcal{C}_k } \geq \frac{ \card*{ \mathcal{A} } }{ 10 \ell \cdot \log p }$ and to finish the proof it suffices to show that $\mathcal{C}_k$ is $S$-covering for all $k \in [K]$. We do this next using \cref{def:covering}.

Fix $k \in [K]$. Let $\vec{b} \neq \vec{b}'$ be a pair of elements in $\mathcal{C}_k$. By definition of our bipartite graph $G$, there exists $\pi \in L$ such that $\pi(\vec{b}), \pi(\vec{b})' \in \mathcal{A}$. As $\mathcal{A}$ is $S$-covering, we get that the pair $\pi(\vec{b}), \pi(\vec{b})'$ is $S$-covering. It follows that the pair $\vec{b}, \vec{b}'$ is also $S$-covering, finishing the proof.

\end{proof}

\subsection{Proof of \cref{thm:aa20-actual}}
\label{sec:AlonA20:proof}

We now prove \cref{thm:aa20-actual}

\begin{proof}[Proof of \cref{thm:aa20-actual}]
Let $\ell = p(p-1) \cdot p^{ 3.5 \log \log p }$. We shall actually show a stronger statement, as explained next. Let $\alpha$ be a primitive root of $p$. For $z \geq 0$, define $\ell_z = p(p-1) \cdot 2^{ 3.5 z \cdot \log \log p }$ and the set $S_z = \set*{ \alpha^0,\alpha^1,\alpha^2,\alpha^3, \dots, \alpha^{2^z - 1} }$. We will show that for all $0 \leq z \leq \log p$, there exists an $S_z$-covering $\paren*{ \ell_z, p }$-Family $\mathcal{A}_z$ of size at least $\paren*{ 2 - \frac{ 0.5 (z + 1) }{ \paren*{ \log p }^2 } }^{\ell_z}$ that satisfies $\mathcal{A}_z \subseteq \mathcal{B}_{\ell_z}$. The theorem then follows by taking $z = \log p$ and using the fact that $\paren*{ \zp \setminus \set*{ 0 } } \subseteq S_{\log p}$ (which is because $\alpha$ is a generator). We define $\mathcal{A}_z$ inductively.

\paragraph{Base case.} For the base case, we define $\mathcal{A}_0$ to be the set of all $\vec{b} \in \mathcal{B}_{\ell_0}$ for which the first $\frac{ 2 \ell_0 }{ p - 1 }$ locations only contain the elements $1$ and $2$ and contain them the same number of times, the next $\frac{ 2 \ell_0 }{ p - 1 }$ locations only contain the elements $3$ and $4$ and contain them the same number of times, and so on. Note that $\mathcal{A}_0 \subseteq \mathcal{B}_{\ell_0}$ is $S_0$ covering and satisfies (as $\binom{2n}{n} \geq \frac{ 2^{2n} }{ 2n }$ for all $n > 0$):
\[
\card*{ \mathcal{A}_0 } = \binom{ \frac{ 2 \ell_0 }{ p - 1 } }{ \frac{ \ell_0 }{ p - 1 } }^{ \frac{ p - 1 }{ 2 } } \geq \frac{ 2^{ \ell_0 } }{ \paren*{ \frac{ 2 \ell_0 }{ p - 1 } }^{ \frac{ p - 1 }{ 2 } } } \geq \paren*{ 2 - \frac{ 0.5 }{ \paren*{ \log p }^2 } }^{\ell_0} .
\]

\paragraph{Inductive case.} For the inductive case, we consider $z > 0$ and define $\mathcal{A}_z$ assuming $\mathcal{A}_{z-1}$ is already defined. First, apply \cref{lemma:part} to get an integer $K_{z-1} > 0$ and a partition $\mathcal{B}_{\ell_{z-1}} = \mathcal{C}_{z-1, 1} \cup \dots \cup \mathcal{C}_{z-1, K_{z-1}}$ of $\mathcal{B}_{\ell_{z-1}}$ such that for all $k \in [K_{z-1}]$, we have that $\mathcal{C}_{z-1, k}$ is an $S_{z-1}$-covering family of size at least $\frac{ \card*{ \mathcal{A}_{z-1} } }{ 10 \ell_{z-1} \cdot \log p } \geq \frac{1}{ \ell_{z-1}^2 } \cdot \paren*{ 2 - \frac{ 0.5 z }{ \paren*{ \log p }^2 } }^{ \ell_{z-1} }$.

We can now apply \cref{lemma:step} (as $\mathcal{B}_{\ell_{z-1}}$ is closed under scalar multiplication) with $m = \paren*{ \log p }^{3.5} + 1$ and $a = \alpha^{2^{z-1}}$ to get an $S_z$-covering $\paren*{ \ell_z, p }$-Family $\mathcal{A}_z \subseteq \mathcal{B}_{\ell_z}$ of size at least:
\begin{align*}
\card*{ \mathcal{A}_z } &\geq \paren*{ \frac{1}{ \ell_{z-1}^2 } \cdot \paren*{ 2 - \frac{ 0.5 z }{ \paren*{ \log p }^2 } }^{ \ell_{z-1} } }^m \cdot \frac{1}{ \card*{ \mathcal{B}_{\ell_{z-1}} } } \\
&\geq \paren*{ \frac{1}{ \ell_{z-1}^2 } \cdot \paren*{ 2 - \frac{ 0.5 z }{ \paren*{ \log p }^2 } }^{ \ell_{z-1} } }^m \cdot \frac{1}{ p^{\ell_{z-1}} } \\
&\geq \paren*{ \frac{1}{ \ell_{z-1}^2 } \cdot \paren*{ 2 - \frac{ 0.5 z }{ \paren*{ \log p }^2 } }^{ \ell_{z-1} } }^m \cdot \paren*{ 1 - \frac{0.1}{ \paren*{ \log p }^2 } }^{ \ell_{z-1} \cdot m } \tag{As $m = \paren*{ \log p }^{3.5} + 1$} \\
&\geq \paren*{ \paren*{ 1 - \frac{1}{ \sqrt{ \ell_{z-1} } } } \cdot \paren*{ 2 - \frac{ 0.5 z }{ \paren*{ \log p }^2 } } \cdot \paren*{ 1 - \frac{0.1}{ \paren*{ \log p }^2 } } }^{ \ell_{z-1} \cdot m } \\
&\geq \paren*{ 2 - \frac{ 0.5 (z+1) }{ \paren*{ \log p }^2 } }^{ \ell_{z-1} \cdot m } \\
&\geq \paren*{ 2 - \frac{ 0.5 (z + 1) }{ \paren*{ \log p }^2 } }^{\ell_z} .
\end{align*}

\end{proof}

%


\section{Conclusion}
\label{sec:conclusion}
We improve the best-known lower bound for matroid intersection prophet inequalities to $q^{1/2+\Omega(1/\log\log q)}$, via an improved upper bound on the product dimension of $Q(p, p^p)$ to $p^{1/2-\Omega(1/\log\log p)}$. There are numerous open directions posed by our work. For example:
\begin{itemize}
    \item What is the product dimension of $Q(p, p^p)$? By~\cref{prop:kw}, improved upper bounds on $Q(p,p^p)$ imply improved lower bounds on $\alpha(\sympartint,\bernoulli)$.
    \item Can the~\cite{KleinbergW12} construction be written using $p^{2-\Omega(1)}$ (perhaps not partition) matroids?
    \item Are there asymptotically better algorithms for the matroid intersection prophet inequality? What about the special case of partition matroids, symmetric feasibility constraints, and i.i.d.~Bernoulli random variables?
\end{itemize}
More generally, our work also proposes consideration of the following class of problems: given a set system $\mathcal{I}$, what is the minimum number $q$ of (partition) matroids $\mathcal{I}_1,\ldots, \mathcal{I}_q$ such that $\mathcal{I}=\cap_{i=1}^q \mathcal{I}_q$?\footnote{The authors thank Bobby Kleinberg for suggesting this broader agenda.}
\bibliography{MasterBib}
\bibliographystyle{alpha}

\end{document}